\def\blfootnote{\xdef\@thefnmark{}\@footnotetext}
\title{The motion of the 2D hydrodynamic Chaplygin sleigh \\ in the presence of  circulation}
\author{Yuri N. Fedorov$^a$, Luis C. Garc\'ia-Naranjo$^{b}$, Joris Vankerschaver$^{c, d}$}
\theoremstyle{plain}
\newtheorem{theorem}{Theorem}[section]
\newtheorem{proposition}[theorem]{Proposition}
\newtheorem*{theorem*}{Theorem}
\newtheorem{remarkth}[theorem]{Remark}
\theoremstyle{definition}
\newenvironment{example}[1][Example.]{\begin{trivlist}
\item[\hskip \labelsep {\bfseries #1}]}{\end{trivlist}}
\def\vv<#1>{\langle#1\rangle}
\def\R{\mathbb{R}}
\def\I{\mathbb{I}}
\def\Lag{\mathcal{L}}
\def\se{se}
 \newcommand{\ad}{\mbox{$\text{\upshape{ad}}$}}
\newcommand{\vecu}{{\bf u}}
\newcommand{\In}{\mathcal{I}}
\newcommand{\Jn}{\mathcal{K}}
\def\F{\mathcal{F}}
\def\M{\mathcal{M}}
\def\B{\mathcal{B}}
\begin{document}
\maketitle

\begin{abstract}
We consider the motion of a planar rigid body in a potential flow with circulation and 
subject to a certain nonholonomic constraint. This model is related to the design of underwater vehicles. 

The equations of motion admit a reduction to a 2-dimensional nonlinear
system, which is integrated explicitly. We show that the reduced system comprises both asymptotic and periodic 
dynamics separated by a critical value of the energy, and give a complete classification of types of the motion.
Then we describe the whole variety of the trajectories of the body on the plane.
\end{abstract}

\blfootnote{\noindent
$^a$ Department de Matem\'atica Aplicada I, Universitat Politecnica de Catalunya, Barcelona, E-08028 Spain; e-mail: Yuri.Fedorov@upc.edu\\
$^b$ Departamento de Matem\'aticas, ITAM, Rio Hondo 1, Mexico City 01000, Mexico;   e-mail: luis.garcianaranjo@gmail.com \\
$^c$ Department of Mathematics, University of California at San Diego, 9500 Gilman Drive, La Jolla CA 92093-0112, USA; e-mail: joris.vankerschaver@gmail.com\\
$^d$ Department of Mathematics, Ghent University, Krijgslaan 281, B-9000 Ghent, Belgium}

\tableofcontents

\section{Introduction and outline}

In this paper, we consider the motion of a planar rigid body surrounded by an irrotational perfect fluid.
It ia assumed that there is a given amount $\kappa$ of circulation around the body, and that the body is subject
to a certain nonholonomic constraint, which models a very effective keel or fin attached to the body.
In the absence of circulation this system was termed the {\em hydrodynamic Chaplygin sleigh} in  \cite{hydro-sleigh},
since in the absence of the fluid, the nonholonomic constraint models the effect of a sharp blade in the classical
Chaplygin sleigh problem \cite{Chaplygin} which prevents the sleigh from moving in the lateral direction.

The hydrodynamic Chaplygin sleigh in the presence of circulation was recently considered in \cite{GN-V},
where it was shown that it is an \emph{LL system} on a certain central extension of $SE(2)$ by $\R^3$,
where the cocycle encodes the effects of the circulation on the body.

Our model for the nonholonomic constraint, which respects the Lagrange-D'Alembert principle,
has been considered in the aerospace engineering community \cite{Rand}, while robotic models for underwater
vehicles taking into account the effects of circulation were considered in \cite{KeHu2006}.

\paragraph{History of the Kirchhoff equations.}

The motion of a rigid body in a potential fluid in the absence of external forces was first described
by Kirchhoff \cite{Ki1877}.  His crucial observation was that the effect of the fluid on the body could
be described entirely in terms of the added mass and added inertia terms,
which depend on the geometry of the body only, and can be calculated analytically for a wide class of body shapes.
Kirchhoff's solution was extended to the case of rigid bodies moving in potential flow with circulation by, among others,
Chaplygin \cite{Ch1933} and \cite{Lamb}, who derived the equations of motion for this system,
provided an explicit integration in terms of elliptic functions, and described qualitative features of the dynamics,
such as periodic motions.
In recent years, these ideas have been extended to the case of rigid bodies interacting with point vortices
\cite{cylvortices, Bor_Mam_Ram}, vortex rings \cite{ShShKeMa08} and other vortical structures,
and they have been used to describe underwater vehicles \cite{Leonard1997} and the motion of bio-organisms
\cite{ChMu2011, ChSi2008}.
A comprehensive overview of the history of these equations can be found in \cite{Borisov_circulation}.

\paragraph{Contributions of this paper.}

We  show that the hydrodynamic Chaplygin sleigh with circulation is a new example of a completely integrable nonholonomic system: we discuss  qualitative features of the  dynamics, and we explicitly integrate the reduced equations of motion.
\paragraph{Contents of the paper.}
In section~\ref{S:Preliminaries} we recall Kirchhoff's equations for a planar rigid body
moving in a potential fluid and consider the Chaplygin-Lamb equations for the motion of the body in
the presence of circulation.
For the purpose of completeness, the added masses are computed explicitly for a body of elliptical shape.

In section~\ref{S:Hydro-sleigh-circ}, the reduced equations of motion on
the coalgebra $\se(2)^*$ are written explicitly in terms
 of the added masses and the coefficient $\alpha$ that depends on the position and orientation of the
 fin (or blade) on the body. Again, for the purpose of completeness, we give explicit formulas
 for the parameters that enter the equations for a body having an elliptical shape.
 It is shown that
the presence of circulation around the body adds new features to the motion of the
hydrodynamic Chaplygin sleigh, and that there exists a critical value of the kinetic energy
of the fluid-body system
that divides periodic from asymptotic behavior. Indeed, for
 small, subcritical energy values, the reduced dynamics are
periodic. In this case  the circulation effects
drive the dynamics (via the so-called {\em Kutta-Zhukowski force}).
On the contrary, if the initial  energy  is supercritical, then
the inertia of the body overcomes the circulation effects and it evolves asymptotically
from one circular motion to another, where the limit circumferences have different radius.
This resembles  the motion of the body in the
absence of circulation treated in \cite{hydro-sleigh}. Moreover, we identify 7 regions
in the reduced phase space that yield distinct qualitative dynamics of the motion of the
sleigh on the plane.

In section~\ref{S:solution}  the reduced equations of motion for the hydrodynamic Chaplygin sleigh
with circulation are integrated explicitly for a generic sleigh. The  form of the solution varies according to
the energy regime. If the  energy is subcritical,
the solution is a quotient of trigonometric functions and we give an explicit expression for the period
in terms of the energy. We also give a closed expression for the angular part of the monodromy matrix
that is involved in the reconstruction process. This formula allows us to show that the
qualitative behavior of the sleigh on the plane is very sensitive to initial conditions for energy values
that are slightly subcritical.

On the other hand, for the critical value of the energy, the solution of the reduced equations is given as a rational
function of time. The closed expression for the solution is used to show that the motion of the
sleigh on the plane in this case is bounded and evolves from one circular motion to another, where the
limit circumferences have equal radius.  Finally, the solution of the reduced equations for supercritical energies is given as a quotient of hyperbolic functions. In this case we also
express the distance between the centers of the limit circumferences in terms of integrals
generalizing the Beta-function.

In Conclusions we motivate a further study of the hydrodynamic Chaplygin sleigh  in the
presence of point vortices.

\section{Preliminaries: Fluid-structure interactions}
\label{S:Preliminaries}

In this section, we give an overview of the Kirchhoff equations describing the dynamics of a rigid body in potential flow,
and of the Chaplygin-Lamb equations dealing with rigid bodies in the presence of circulation.  
Most of the material covered in this 
section can be found, for instance, in \cite{Kanso}
as well as in the classical works of Lamb \cite{Lamb} and Milne-Thomson \cite{MiTh1968}.

\subsection{Kinematics of rigid bodies and ideal fluids}
\label{S:Kirchhoff}

Following Euler's approach, consider an orthonormal \emph{body frame}
$\{{\bf E}_1, {\bf E}_2\}$ that is attached to the body. This frame is related to a fixed \emph{space frame}
$\{{\bf e}_1, {\bf e}_2\}$
by a rotation  by an angle $\theta$ that specifies the orientation of the two dimensional body at each time.
We will denote by ${\bf x}=(x, y) \in \R^2$ the spatial coordinates of the origin of the body frame (see Figure
\ref{F:kinematics-diagram}).
The configuration of the body at any time is
completely determined by the element  $g$ of the two dimensional Euclidean group
$SE(2)$ given by
\begin{equation*}
g= \left ( \begin{array}{ccc} \cos \theta  & - \sin \theta  & x \\ \sin \theta  & \cos \theta  & y \\ 0 & 0 & 1 \end{array} \right ) \in SE(2).
\end{equation*}

\begin{figure}[h]
\centering
\subfigure[Body frame is aligned with axes of symmetry of the body]{\includegraphics[width=5cm]{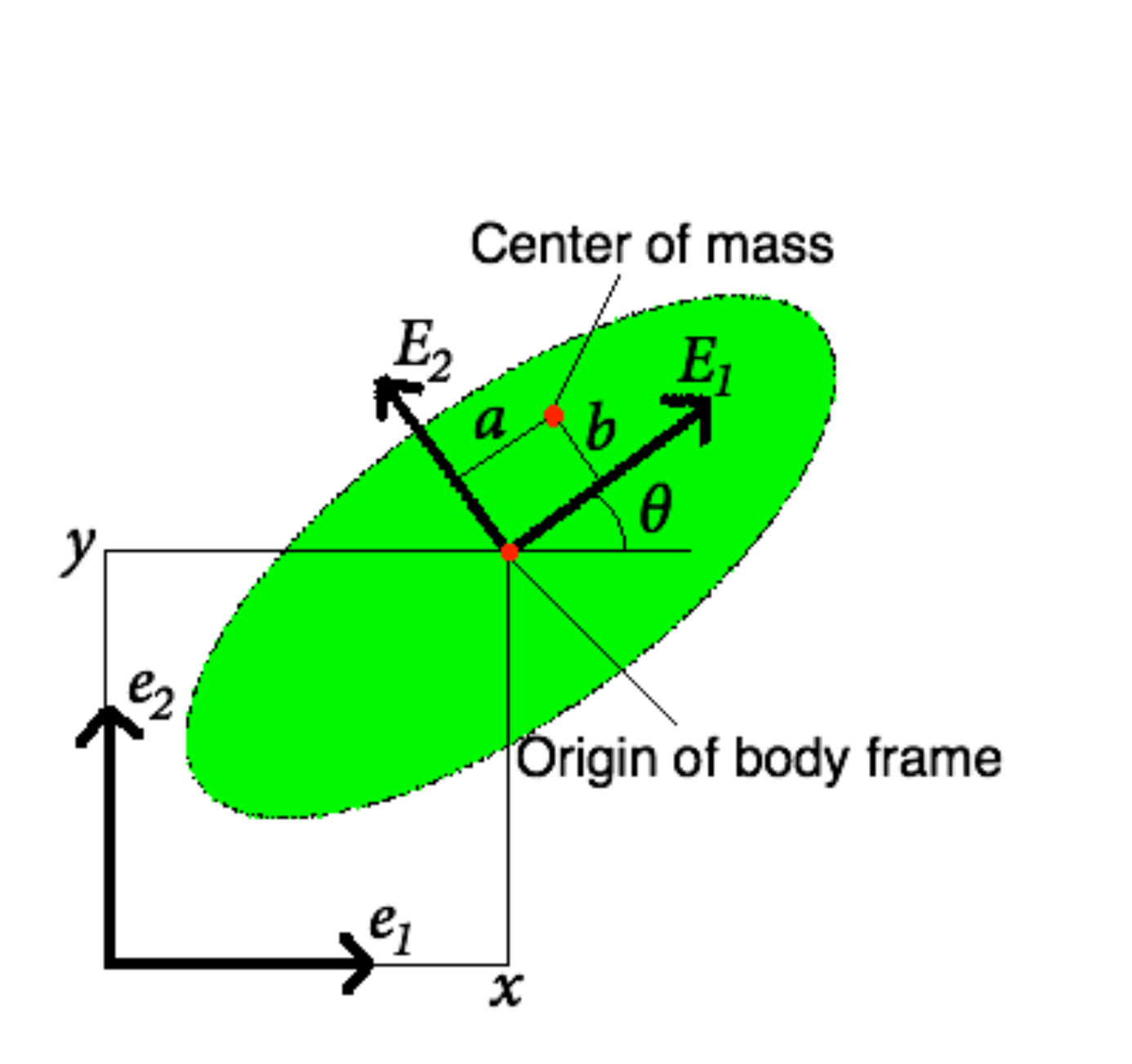}} \qquad \qquad
\subfigure[Arbitrary position and orientation of the body frame.]{\includegraphics[width=5cm]{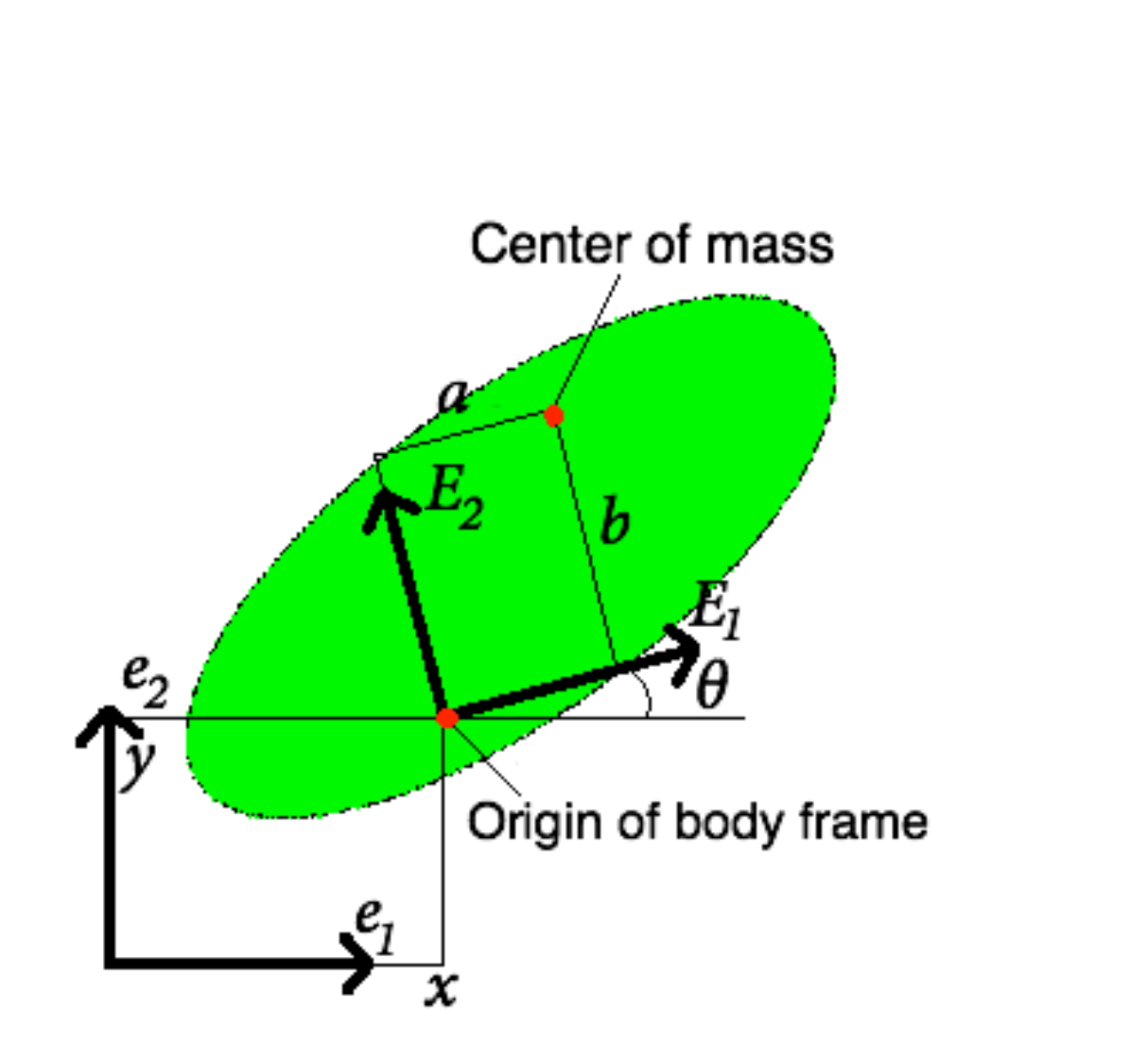}}
\caption{
\small{Two different choices of the body  frame for an elliptical  two-dimensional rigid body. In both cases the origin of
the body frame does not coincide with the center of mass. \label{F:kinematics-diagram}}
}
\end{figure}

We will often denote the  above element in $g\in SE(2)$ by $g=(R_{\theta}, {\bf x})$ where
$R_{\theta} \in \operatorname{SO}(2)$ is the  rotation matrix determined by the angle $\theta$.
 Let $(v_1, v_2)\in \R^2$ be the linear velocity of the origin of the body frame written in
the body coordinates, and denote by $\omega  = \dot \theta $ the body's angular velocity. They define the element $\xi $ in the Lie algebra $\se(2)$
given by
\begin{equation*}
  \xi = g^{-1}\dot g= \left ( \begin{array}{ccc} 0 & - \omega   & v_1 \\ \omega  & 0& v_2 \\ 0 & 0 & 0 \end{array} \right ) \in \se (2).
\end{equation*}
Explicitly we have
\begin{equation}
\label{E:Reconstruction_Equations_2D}
\dot \theta = \omega, \qquad v_1=\dot x \cos \theta  +  \dot y\sin \theta , \quad v_2= -\dot x\sin \theta   +  \dot y \cos \theta .
\end{equation}
For convenience, we will sometimes 
denote $\xi \in \se(2)$ as the column vector $(\omega, v_1, v_2)^T\in \R^3$. The Lie algebra commutator
takes the form
\begin{equation*}
	  \left [ \, (\omega, v_1,v_2) \, , \,  (\omega', v_1',v_2')\, \right ]_{\mathfrak{se}(2)} 	  = ( \, 0 \, , \,  v_2\omega'-\omega v_2' \, , \,  \omega v_1' -v_1 \omega ' \,  ).
\end{equation*}
The kinetic energy of the body is given by
\begin{equation}
\label{E:Body_Energy_in_terms_of_velocities}
T_\mathcal{B}= \frac{1}{2} \left ( (\In_\mathcal{B} +ma^2+mb^2) \omega^2 + m( v_1^2+v_2^2) -mb\omega v_1 +ma \omega v_2  \right ),
\end{equation}
where  $m$ is the mass of the body, $(a,b)$ are body coordinates of the center of mass
(see Figure \ref{F:kinematics-diagram}), and $\In_\mathcal{B} $
is the moment of inertia of the body about the center of mass. It is a positive
definite quadratic form on $\se (2)$ whose matrix  is the \emph{body inertia tensor}
 \begin{equation}
\label{E:Inertia_Matrices_2D}
\I_\mathcal{B}=\left ( \begin{array}{ccc}\In_\mathcal{B} + m(a^2+b^2) & -mb & ma \\ -mb   & m & 0 \\ ma& 0 & m \end{array} \right ).
\end{equation}

\paragraph{The fluid flow at a given instant.} Consider now the motion of the fluid that
surrounds the body. Suppose that at a given instant the body occupies a
region $\mathcal{B}\subset \R^2$.  The flow is assumed to take place in the connected unbounded
region  $\mathcal{U}:=\R^2\setminus \mathcal{B}$ that is not occupied by the body.

We  assume that  the flow is  potential so the  Eulerian velocity of the fluid $\vecu$ can be written as $\vecu =\nabla \Phi$ for
a  fluid potential $\Phi:\mathcal{U} \to \R$. Incompressibility of the fluid implies that
$\Phi$ is harmonic,
\begin{equation*}
\label{E:Laplace_Equation}
\nabla^2 \Phi =0 \qquad \mbox{on} \qquad \mathcal{U}.
\end{equation*}

The boundary conditions for $\Phi$ come from the following considerations. On the one hand it is assumed that, up to a purely circulatory flow around the
body,  the motion of the
fluid is solely due to the motion of the body. This assumption   requires the fluid velocity
$\nabla \Phi $ to vanish at infinity. Secondly, to avoid cavitation or penetration of the fluid into the body,
we require the normal component of the fluid velocity at a material point $p$ on the boundary of $\mathcal{B}$
to agree with the
normal component of the velocity of $p$. Suppose that the vector $( X,Y) \in \R^2$ gives   body coordinates for $p$. The
latter boundary condition is expressed as
\begin{equation*}
\label{E:Boundary_conditions}
\left .  \frac{\partial \Phi}{\partial n} \right |_{p\in \partial \mathcal{B}} =  (v_1-\omega Y)n_1
 +(v_2 +\omega X)n_2,
\end{equation*}
where  ${\bf n}=(n_1,n_2)$ is the outward unit  normal vector to $\mathcal{B}$ at  $p$ written
in body coordinates. These conditions determine the flow of the fluid up to a purely
circulatory flow around the body  that would persist if the body
is brought to rest. The latter is specified by the value of the circulation $\kappa$ around the body as we
now discuss.

The potential $\Phi$ that satisfies the above boundary value problem  can be
written in terms of the body's  velocities $v_1, v_2, \omega$, in \emph{Kirchhoff form}:
  \begin{equation}
\label{E:Kirch_Potential2D}
\Phi= v_1 \phi_1 +  v_2 \phi_2 +  \omega \chi +\phi_0,
\end{equation}
%
%

where $\phi_i$, $i=0,1,2$, and $\chi$ are harmonic functions on $\mathcal{U}$ whose gradients vanish at infinity and  satisfy:
\begin{equation*}
\label{E:Boundary_conditions_Kirch_form}
\left . \frac{\partial \phi_i}{\partial n} \right |_{\partial \mathcal{B}} =n_i, \; i=1,2, \qquad \left . \frac{\partial \chi} {\partial n} \right  |_{\partial \mathcal{B}} = Xn_2-Yn_1, \qquad \left . \frac{\partial \phi_0}{\partial n} \right |_{\partial \mathcal{B}} =0.
\end{equation*}
The potential $\phi_0$ is multi-valued and  defines the
circulatory flow around the body.
 The circulation $\kappa$ of the fluid around the body satisfies
\begin{equation*}
\label{E:defcirc}
\kappa = \oint_{\partial \mathcal{B}} \vecu \cdot d{\mathbf{l}}= \oint_{\partial \mathcal{B}} \nabla \phi_0 \cdot d{\mathbf{l}},
\end{equation*}
and remains constant during the motion. Figure \ref{Fig:streamlines} shows the streamline pattern of the flow determined by the motion of an elliptical
body  for different values of $\kappa$.
\begin{figure}[h]
\centering
\subfigure[$\kappa=0$]{\includegraphics[width=5cm]{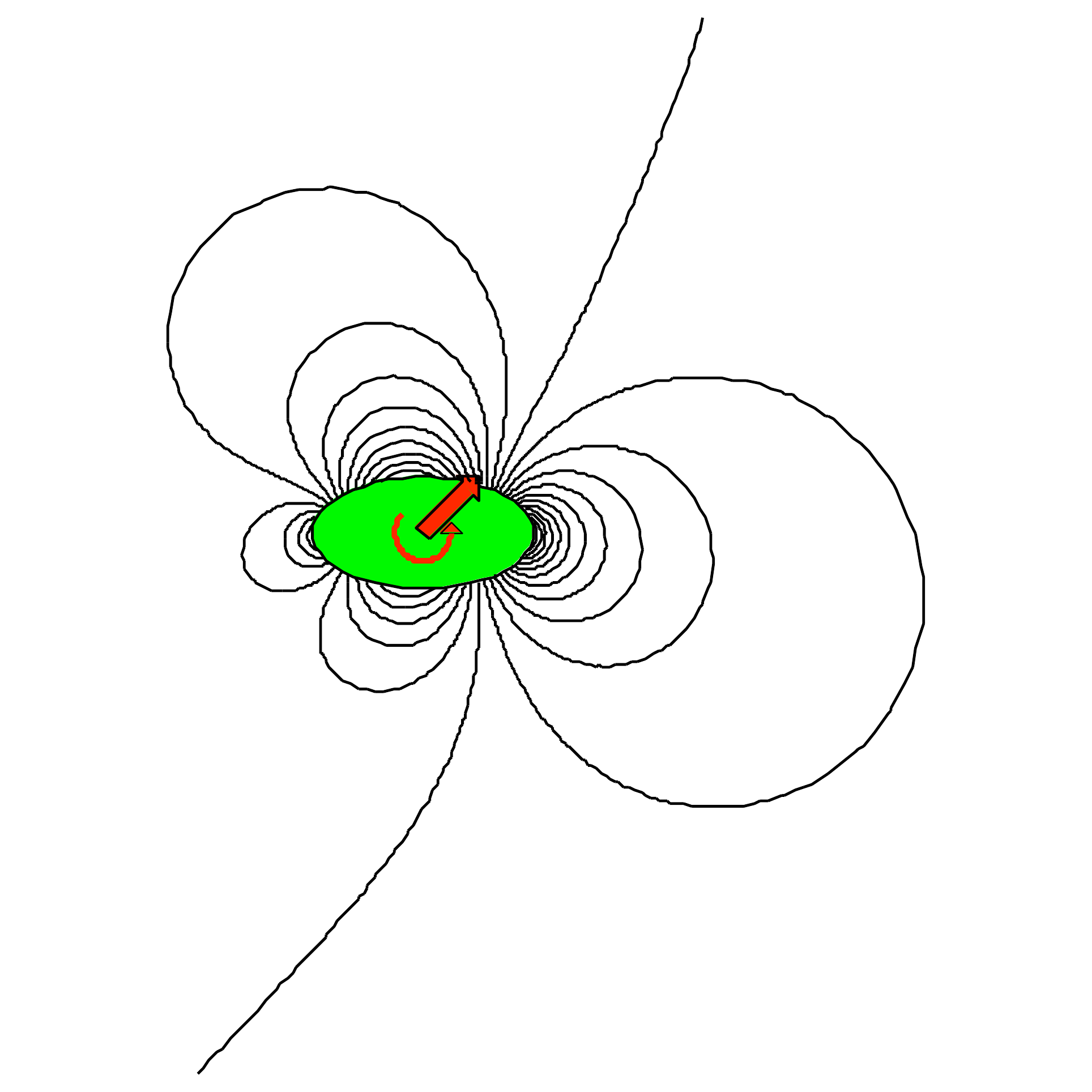}} \qquad
\subfigure[$\kappa=2\pi$]{\includegraphics[width=5cm]{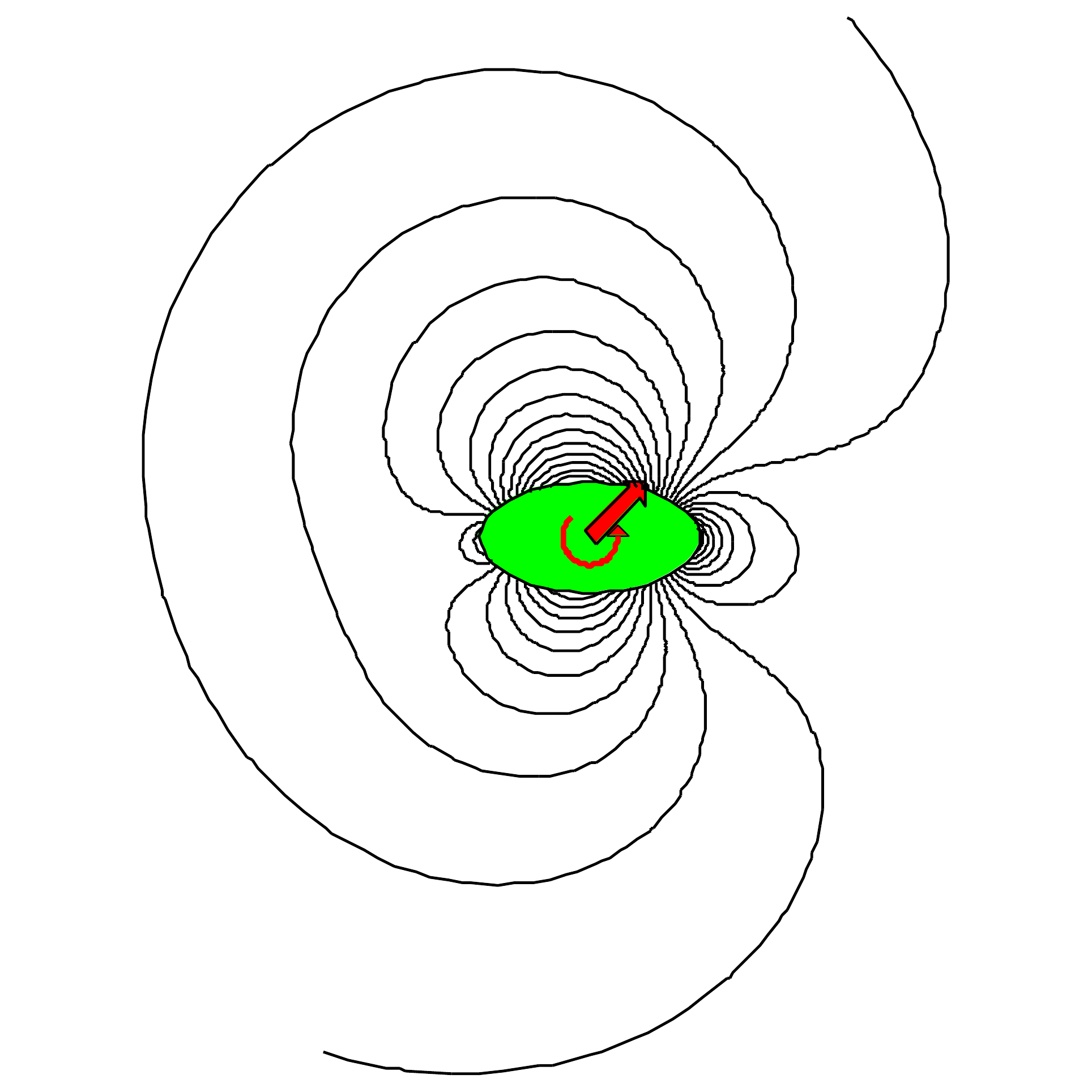}} \qquad
\subfigure[$\kappa=10$]{\includegraphics[width=5cm]{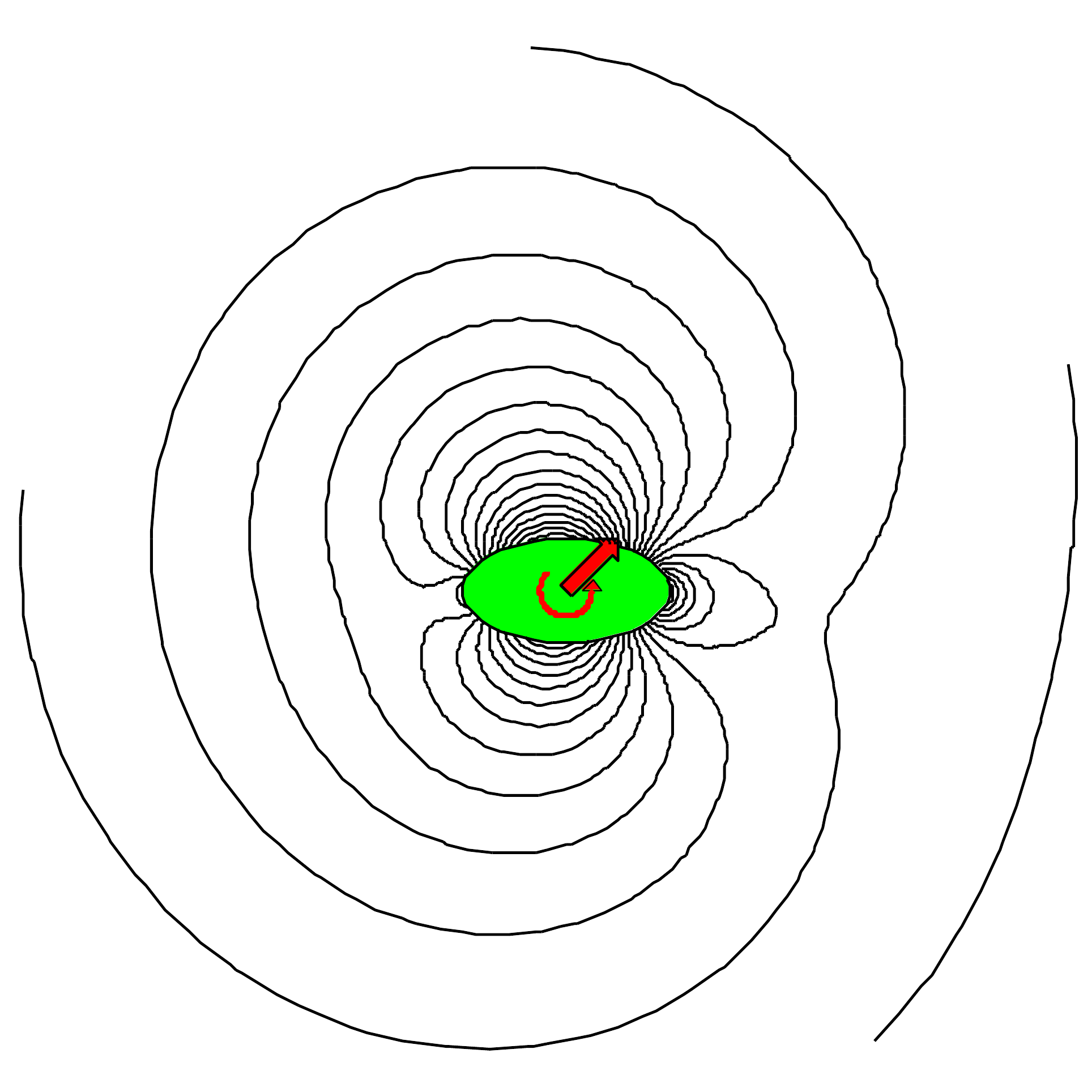}}
\caption{
\small{ Stream line pattern for an ellipse moving on the plane for different values of the circulation $\kappa$.  The major and minor semi-axes of the ellipse are  $A=2, B=1$. The body frame is aligned with the principal axes of the ellipse and the velocity of the body satisfies  $v_1=v_2=1$ and $\omega=5$. \label{Fig:streamlines}}
}
\end{figure}

Disregarding the  circulatory motion, the kinetic energy of the fluid  is given by
\begin{equation*}
T_\mathcal{F}=\frac{\rho}{2} \int_\mathcal{U}
 || \nabla (\Phi-\phi_0) ||^2 \, dA,
\end{equation*}
where $dA$ is the area element in $\R^2$ and  $\rho$ is the (constant) fluid density.  We have subtracted the circulatory part from the velocity potential, as it would give rise to an infinite contribution to the fluid kinetic energy \cite{Sa1992}.

By substituting  \eqref{E:Kirch_Potential2D} into the above, one can express $T_\mathcal{F}$
 as the quadratic form
\begin{equation}
\label{E:Fluid_Energy_in_terms_of_velocities}
T_\mathcal{F}= \frac{1}{2} \left ( \sum_{i,j=1}^2 \M^{ij}_\mathcal{F} v_{i}v_j + 2 \sum_{i=1}^2\Jn^{i}_\mathcal{F}v_i\omega +\In_\mathcal{F} \omega^2  \right ),
\end{equation}
where  $\mathcal{M}_\mathcal{F}^{ij}, \Jn_\mathcal{F}^{i}$, $i,j=1,2$, and $\In_\mathcal{F}$  are certain constants that only depend on the body shape.
Explicitly one has (see \cite{Lamb} for details),
\begin{equation*}
\begin{split}
& \M^{ij}_\mathcal{F}=-\rho \int_{\partial \mathcal{B}} \phi_i \frac{\partial \phi_j}{\partial n} \, dl=-\rho \int_{\partial \mathcal{B}} \phi_j \frac{\partial \phi_i}{\partial n} \, dl , \; i,j=1,2,  \qquad \In_\mathcal{F}=-\rho \int_{\partial \mathcal{B}} \chi \frac{\partial \chi}{\partial n} \, dl
 \\ & \Jn^i_\mathcal{F}=-\rho \int_{\partial \mathcal{B}} \phi_i \frac{\partial \chi}{\partial n} \, dl = -\rho \int_{\partial \mathcal{B}} \chi \frac{\partial \phi_i}{\partial n} \, dl, \; i=1,2.\end{split}
\end{equation*}
These constants are referred to as \emph{added masses} and are conveniently written
in $3\times 3$ matrix form to define the (symmetric) \emph{added inertia tensor}:
\begin{equation*}
\I_\mathcal{F}:=\left ( \begin{array}{cc} \In_\mathcal{F} & \Jn_\mathcal{F} \\ \Jn_\mathcal{F}^T  & \mathcal{M}_\mathcal{F} \end{array} \right ),
\end{equation*}
that defines $T_\mathcal{F}$ as a quadratic form on $\se(2)$.

\begin{example} For an elliptic rigid body with semi-axes of length $A>B>0$, the added masses and moments of inertia take on a particularly convenient form.  The kinetic energy of the fluid is given by (see \cite{Lamb})
\begin{equation*}
T_{\mathcal{F}}= \frac{\rho \pi }{2} \left ( B^2 v_1^2 +A^2v_2^2 + \frac{(A^2-B^2)^2}{4}\omega^2 \right ),
\end{equation*}
where we have ignored the circulatory motion around the body. The corresponding added inertia tensor is
thus given by
\begin{equation}
\label{E:Added_Inertia_Ellipse}
\I_\mathcal{F}=\rho \pi \left ( \begin{array}{ccc} \frac{(A^2-B^2)^2}{4} & 0 & 0 \\ 0 &   B^2 & 0 \\ 0 & 0 &  A^2\ \end{array} \right ).
\end{equation}
We emphasize that this particular form of the added inertia tensor was derived under the assumption that the body frame is aligned with the symmetry axes of the ellipse ($\nu=0$, $r=s=0$ in Figure~\ref{F:SleighDiagram} ahead).  When this is not the case, the added mass tensor is more complicated, and in particular need not be diagonal, as is shown in \eqref{E:Added_Inertia_Ellipse_Rot_Axes}.
\end{example}

\subsection{Rigid body motion in potential flow}

The total kinetic energy, $T$,  of the solid-fluid system (excluding the circulatory motion)
is the sum of the kinetic energy $T_\B$ of the rigid body and the
 energy $T_\F$ of the fluid.  As both $T_\B$ and $T_\F$ are functions on $T SE(2)$, so is the total energy $T$.  In the absence of external forces or circulation, the Lagrangian $\Lag$ of the solid-fluid system is just the kinetic energy: $\Lag = T$, and in this case, the motion of the rigid body describes a geodesic curve in
 $SE(2)$ with respect to the Riemannian metric defined by $\Lag$.

 In view of
 (\ref{E:Body_Energy_in_terms_of_velocities}) and (\ref{E:Fluid_Energy_in_terms_of_velocities}), we can write the Lagrangian $\Lag=T_\B+T_\F$
in terms of the linear and angular velocities of the body (written in the body frame) and this expression  does not  depend on the particular position and orientation of the body, i.e. is independent of the group element $g=(R_\theta, {\bf x})\in SE(2)$.
Thus $\Lag$ is  invariant under the lifted action  of left multiplication on $SE(2)$. This symmetry corresponds to invariance under translations and rotations of the space frame. The reduction of this symmetry defines Euler-Poincar\'e equations on the Lie algebra $\se(2)$ or, in the Hamiltonian setting, the Lie-Poisson equations on the coalgebra $\se(2)^*$. The latter are precisely Kirchhoff's equations that are explicitly written below.

The invariance of $\Lag$ allows us to define a function
$L$ on $\se(2)$, called the \emph{reduced Lagrangian}.  Explicitly, $L$ is given by
\begin{equation*}
\label{E:redlag}
 	L(\xi)= \frac{1}{2} \xi^T \I \xi,
\end{equation*}
where $\xi=(\omega, v_1, v_2)^T\in \se(2) \cong \R^3$ is  a  column vector and
the matrix $\I$ is the sum of the inertia matrix $\I_B$ of the rigid body and the added masses and inertia $\I_\mathcal{F}$ of the fluid: $\I =\I_\mathcal{B}+\I_\mathcal{F}$.

Since $\se(2)$ is isomorphic to $\mathbb{R}^3$ and using the Euclidian inner product, we
identify $\se(2)^\ast \cong \mathbb{R}^3$.  A typical element $\mu$ is represented as a row vector $\mu = (k, p_1, p_2)$.  The duality pairing between $\mu$ and an element $\xi= (\omega, v_1, v_2)^T$ of $\se(2)$ is given by
\begin{equation*}
\label{E:pairing_in_se(3)}
\langle \mu, \xi \rangle =\mu \xi=k\omega +p_1v_1+p_2v_2.
\end{equation*}

With this identification, the \emph{Legendre transform} associated to $L$ is defined as the mapping
\[
	\mathbb{F} L : \se(2) \rightarrow \se(2)^\ast
\]
given by $\mathbb{F} L (\xi) = \mu$, where $\mu = (\I \xi)^T$.  The components of $\mu = (k,p_1,p_2)$ are explicitly given by
\begin{equation}
\label{E:Explicit_Legendre}
\begin{split}
k&=(\mathcal{I}_\mathcal{B}+m(a^2+b^2) +\mathcal{I}_\mathcal{F})\omega + (-mb + \Jn_\mathcal{F}^1) v_1 +(ma + \Jn_\mathcal{F}^2) v_2, \\
p_1&=(-mb + \Jn_\mathcal{F}^1) \omega + (m+ \mathcal{M}^{11}_\mathcal{F})v_1 +  \mathcal{M}^{12}v_2 , \\
p_2&=(ma + \Jn_\mathcal{F}^2) \omega + \mathcal{M}^{12}v_1 +   (m+ \mathcal{M}^{22}_\mathcal{F})v_2.
\end{split}
\end{equation}
In classical hydrodynamics $k$ and $(p_1,p_2)$ are known as ``impulsive pair" and ``impulsive force" respectively.

The reduced Hamiltonian $H:\se(2)^*\rightarrow \R$ is given by
\begin{equation*}
\label{E:Hamiltonian}
H(\mu)=\frac{1}{2}\mu \I^{-1}\mu^T,
\end{equation*}
and the corresponding (minus) Lie-Poisson equations are
$
\dot \mu=\ad^*_{\I^{-1} \mu}\mu.
$
Written out in component form, these equations are nothing but the \emph{Kirchhoff equations}:
\begin{equation}
\label{E:Kirchhoff_2D}
\begin{split}
\dot k &= v_2p_1-v_1p_2, \\
\dot p_1 &= \omega p_2, \quad  \quad
\dot p_2 = - \omega p_1,
\end{split}
\end{equation}
where the velocities $(\omega, v_1, v_2)^T$ and the impulses $(k, p_1, p_2)$ are related by the Legendre transformation \eqref{E:Explicit_Legendre}.  Finally, we remark that the motion of the body in space can be found from a solution of \eqref{E:Kirchhoff_2D} by solving the \emph{reconstruction equations} \eqref{E:Reconstruction_Equations_2D}.

%

\subsection{Rigid body motion with circulation}
\label{sec:circ}

In the presence of circulation, the Kirchhoff equations on $\se(2)^*$ have to be modified to include the Kutta--Zhukowski force mentioned in the introduction.  This is a gyroscopic force, which is proportional to the circulation $\kappa$.  In this case, the equations of motion are referred to as the \emph{Chaplygin-Lamb equations} \cite{Ch1933, Lamb}, and they are given by
\begin{equation}
\label{E:Kirchhoff_1_2D_circulation}
\begin{split}
\dot k &= v_2p_1-v_1p_2 - \rho(\alpha v_1 + \beta v_2), \\
\dot p_1 &= \omega p_2 - \kappa \rho v_2 + \rho \alpha \omega, \\
\dot p_2 &= - \omega p_1 + \kappa \rho v_1 + \rho \beta \omega,
\end{split}
\end{equation}
The constants $\alpha$ and $\beta$ are  proportional to the circulation $\kappa$ and depend of the position and orientation of the
body axes. They are explicitly given by:
\begin{equation}
\label{E:circ_constants}
\alpha= \oint_{\partial \B}X\nabla \phi_0 \cdot  \, d\mathbf{l}, \qquad \beta= \oint_{\partial \B}Y\nabla \phi_0 \cdot  \, d\mathbf{l},
\end{equation}
where, as before,  $(X,Y)$ are body coordinates  for material points in $\partial \mathcal{B}$.  The Chaplygin-Lamb equations are discussed in detail in \cite{Borisov_circulation} and a derivation from first principles, using techniques from symplectic geometry and reduction theory, is given in \cite{Joris_circulation}.

  One easily verifies that if the center of the body axes is displaced to
the point with body coordinates $(r,s)$, so that the new body coordinates are
$\tilde X=X-r, \; \tilde Y=Y-s$, then the circulation  constants relative to the new coordinate axes take
the form $\tilde \alpha=\alpha -r\kappa, \; \tilde \beta =\beta -s\kappa$. Thus, there is a unique choice
 of the body axes  that makes these constants vanish. On the other hand,
it is also desired to choose the body axes so that the total inertia tensor $\I$ is diagonal.
For an asymmetric body, these two choices are in general incompatible,  see e.g. \cite{Lamb}.

 For our purposes, the choice of body axes will be made  to simplify the expression of the nonholonomic
 constraint. We therefore consider  equations \eqref{E:Kirchhoff_1_2D_circulation} in their full
generality where $\alpha, \, \beta\neq 0$, and $\I$ is not  diagonal. This contrasts with
the treatment in \cite{Joris_circulation} where it is assumed that $\alpha=\beta=0$ and
with \cite{Borisov_circulation} where the complementary assumption, namely that $\I$ is diagonal,
is made.

It is shown in \cite{GN-V} that equations \eqref{E:Kirchhoff_1_2D_circulation} are of Euler-Poincar\'e type
on a central extension of $SE(2)$ and thus are Hamiltonian.

\section{The hydrodynamic planar Chaplygin sleigh with circulation}
\label{S:Hydro-sleigh-circ}

We are now ready to consider the mechanical system of our interest
which is the generalization of the  hydrodynamic version of the Chaplygin sleigh treated in \cite{hydro-sleigh} to the case when there is circulation around the body. Recall that the classical Chaplygin sleigh problem (going back to 1911, \cite{Chaplygin}) describes the motion of a planar rigid body with a knife edge (a blade) sliding on a horizontal plane. The nonholonomic constraint forbids the motion in the direction perpendicular to the blade. In its hydrodynamic version, the body is surrounded by a potential fluid and the nonholonomic constraint models the effect of a very effective fin or keel, see  \cite{hydro-sleigh}.

With the notation from section~\ref{S:Preliminaries}, we let $\{{\bf E}_1, {\bf E}_2 \}$ be a body frame located at the contact point of the sleigh and the
plane, and so that the ${\bf E}_1$-axis is aligned with the blade (see Figure \ref{F:SleighDiagram}).  The resulting nonholonomic constraint is given by $v_2=0$, and is clearly left invariant under the action of  $SE(2)$, as it is solely written in terms of the velocity of the body as seen in the body frame.

In the absence of constraints, the motion of the body  is described by the Chaplygin--Lamb equations \eqref{E:Kirchhoff_1_2D_circulation}.
In agreement with the Lagrange-D'Alembert principle, which states that the constraint forces perform no work during the
motion, the equations of motion for the constrained system are
%
\begin{equation}
\begin{split}
\label{E:Constrained_2D_circulation}
\dot k &= v_2p_1-v_1p_2 - \rho(\alpha v_1 + \beta v_2), \\
\dot p_1 &= \omega p_2 - \kappa \rho v_2 + \rho \alpha \omega, \\
\dot p_2 &= - \omega p_1 + \kappa \rho v_1 + \rho \beta \omega + \lambda,
\end{split}
\end{equation}
where the multiplier $\lambda$ is determined from the condition $v_2=0$. These equations have been shown to be of
Euler-Poincar\'e-Suslov type on the dual Lie algebra of a central extension of $SE(2)$ in \cite{GN-V}.

\paragraph{The total inertia tensor and the circulation constants $\alpha, \beta$.}
The behavior of the solutions of \eqref{E:Constrained_2D_circulation} depends crucially on the
total inertia tensor $\I=\I_\B+\I_\F$ that relates $(k,p_1,p_2)$ to $(\omega, v_1,v_2)$, and on the
value of the circulation constants $\kappa$ and $\alpha, \beta$.

The expression for  $\I_\B$ with respect to the body frame $\{ {\bf E}_1,  {\bf E}_2 \}$ is
 given by \eqref{E:Inertia_Matrices_2D}
where  $m$ is the mass of the body, $(a,b)$ are body coordinates of the center of mass, and $\In$ is the moment of inertia of the body about the center of mass. While this simple expression is independent
of the body shape, an explicit expression for the tensor of adjoint masses $\I_\F$ can be given explicitly only for rather simple geometries.
A simple yet interesting one is that for an elliptical uniform planar body with the semi-axes of length  $A>B>0$.
Assume  that the origin has coordinates $(r,s)$ with respect to the frame that is aligned
with the principal axes of the ellipse and
that the coordinate axes ${\bf E}_1 \, {\bf E}_2$ are not aligned with the axes of the ellipse,
forming an angle $\upsilon$ (measured counter-clockwise), as illustrated in Figure \ref{F:SleighDiagram}.

\begin{figure}[ht]
\centering
\includegraphics[width=8cm]{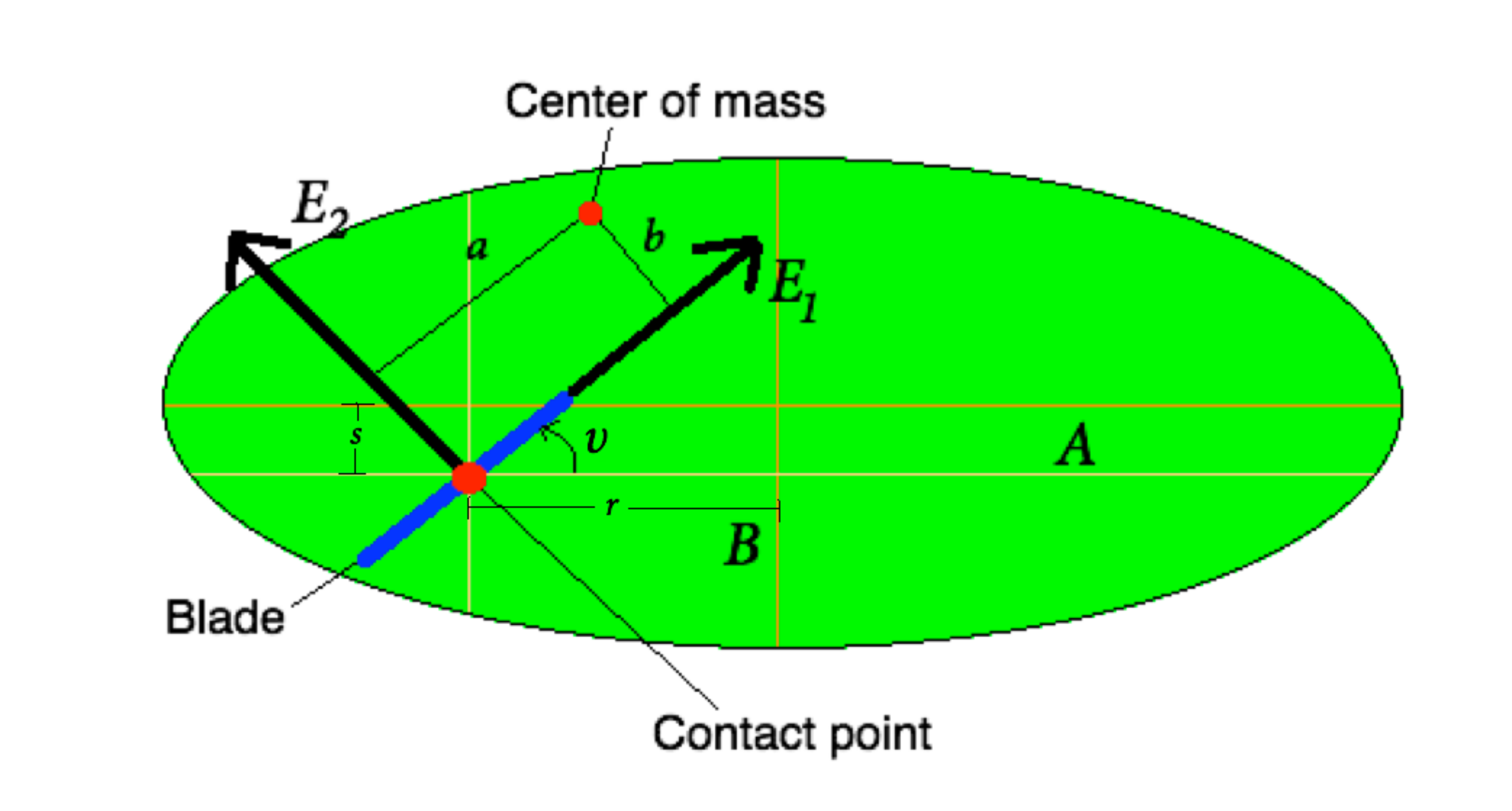}
\caption{\small{The elliptical  sleigh. The blade makes
an angle $\upsilon$ with the major axis of the ellipse and the contact point has coordinates $(r,s)$ with respect to
the frame that is determined by the principal axes of the ellipse (in the diagram both $r$ and $s$ are
negative).\label{F:SleighDiagram}}}
\end{figure}

%

For this geometry, 
starting from the formula \eqref{E:Added_Inertia_Ellipse} for the added inertia tensor given in section \ref{S:Kirchhoff},
one can show that
\begin{equation}
\begin{tiny}
\label{E:Added_Inertia_Ellipse_Rot_Axes}
\I_\mathcal{F}=\rho \pi \left ( \begin{array}{ccc} \begin{array}{c} \frac{ (A^2-B^2)^2}{4}+s^2(B^2\cos^2 \upsilon+ A^2 \sin ^2 \upsilon)  \\ + r^2(A^2\cos^2 \upsilon+ B^2 \sin ^2 \upsilon)  -rs(A^2-B^2) \sin (2\upsilon) \end{array} & \begin{array}{c} s (B^2\cos^2 \upsilon+ A^2 \sin ^2 \upsilon) \\ - \frac 12 r(A^2-B^2) \sin (2\upsilon) \end{array}  & \begin{array}{c} - r (A^2\cos^2 \upsilon+ B^2 \sin ^2 \upsilon) \\ + \frac 12 s(A^2-B^2) \sin (2\upsilon) \end{array} \\ & &  \\  \begin{array}{c} s (B^2\cos^2 \upsilon+ A^2 \sin ^2 \upsilon) \\ - \frac 12 r(A^2-B^2) \sin (2\upsilon) \end{array}  &   B^2\cos^2\upsilon +A^2 \sin^2 \upsilon& \frac{ A^2-B^2}{2} \, \sin (2\upsilon) \\ & & \\  \begin{array}{c} - r (A^2\cos^2 \upsilon+ B^2 \sin ^2 \upsilon) \\ + \frac 12 s(A^2-B^2) \sin (2\upsilon) \end{array} & \frac{ A^2-B^2}{2} \, \sin (2\upsilon) & A^2\cos ^2\upsilon + B^2\sin^2\upsilon \end{array} \right ).
\end{tiny}
\end{equation}

The total inertia tensor, $\I= \I_\mathcal{B} +\I_\F$, of the fluid-body system is then given by
\begin{tiny}
\begin{equation*}
\label{E:Total_Inertia_Ellipse_Rot_Axes}
\I= \left ( \begin{array}{ccc} \begin{array}{c}   \In+m(a^2+b^2)  \\ + \rho \pi  [\frac{ (A^2-B^2)^2}{4}+s^2(B^2\cos^2 \upsilon+ A^2 \sin ^2 \upsilon)  \\ + r^2(A^2\cos^2 \upsilon+ B^2 \sin ^2 \upsilon)  -rs(A^2-B^2) \sin (2\upsilon) ]  \end{array} & \begin{array}{c} -mb + \rho \pi [ s (B^2\cos^2 \upsilon+ A^2 \sin ^2 \upsilon) \\ - \frac 12 r(A^2-B^2) \sin (2\upsilon) ] \end{array}  & \begin{array}{c} ma + \rho \pi [- r (A^2\cos^2 \upsilon+ B^2 \sin ^2 \upsilon) \\ + \frac 12 s(A^2-B^2) \sin (2\upsilon) ] \end{array} \\ & &  \\  \begin{array}{c} -mb + \rho \pi [ s (B^2\cos^2 \upsilon+ A^2 \sin ^2 \upsilon) \\ - \frac 12 r(A^2-B^2) \sin (2\upsilon) ]  \end{array}  &  m+ \rho \pi [ B^2\cos^2\upsilon +A^2 \sin^2 \upsilon ] & \rho \pi \left [ \frac{ A^2-B^2}{2} \, \sin (2\upsilon) \right ] \\ & & \\  \begin{array}{c} ma + \rho \pi [- r (A^2\cos^2 \upsilon+ B^2 \sin ^2 \upsilon) \\ + \frac 12 s(A^2-B^2) \sin (2\upsilon) ]  \end{array} & \rho \pi \left [ \frac{ A^2-B^2}{2} \, \sin (2\upsilon) \right ] & m+ \rho \pi ( A^2\cos ^2\upsilon + B^2\sin^2\upsilon ) \end{array} \right ).
\end{equation*}
\end{tiny}

Notice that in the presence of the fluid, if $\upsilon \neq n\frac{\pi}{2}, n\in \mathbb{Z}$, the coefficient $\I_{23}=\I_{32}$ is non-zero. This can never
be the case if the sleigh is moving in vacuum as one can see from the expression given for $\I_\B$ in  (\ref{E:Inertia_Matrices_2D}).
The appearance of this non-zero term leads to interesting dynamics that, in the absence of circulation were studied in \cite{hydro-sleigh}.
For the above geometry, the circulation constants $\alpha$ and $\beta$ defined by \eqref{E:circ_constants} are computed to be:
\begin{equation*}
\alpha=\kappa ( r\cos \upsilon+ s\sin \upsilon), \qquad \beta = \kappa(  -r\sin \upsilon+ s\cos \upsilon ).
\end{equation*}
Notice that, in the presence of circulation, the two constants, $\alpha$ and $\beta$, can vanish simultaneously only
 if $r=s=0$, that is, only if the contact point is at the center of the ellipse.

In the sequel we assume that the shape of the sleigh is arbitrary convex and that its center of mass  does
not necessarily coincide with the origin, which leads to the general total inertia tensor
\begin{equation*}
\label{E:Total_Inertia_Ellipse_Rot_Axes}
\I= \left ( \begin{array}{ccc} J & -L_2 & L_1 \\ -L_2 & M &Z  \\ L_1& Z & N \end{array} \right ) ,
\end{equation*}
and with arbitrary circulation constants $\alpha, \, \beta$.

\paragraph{Detailed equations of motion.}
The constraint written in terms of momenta is $v_2 = \mathbb{I}^{-1}(k, p_1,p_2)^T=0$. Differentiating it and using \eqref{E:Constrained_2D_circulation}, we find the multiplier
\begin{equation*}
\label{E:Lagrange_Multiplier_part}
\lambda=-\frac{1}{( { \I^{-1}} )_{33}} \left( \I^{-1} \left ( \begin{array}{c} v_2 p_1 - v_1 p_2 -\rho(\alpha v_1 + \beta v_2) \\ \omega p_2 -\kappa \rho v_2 + \rho \alpha \omega \\ -\omega p_1 + \kappa \rho v_1 + \rho \beta \omega) \end{array} \right ) \right )_3,
\end{equation*}
where
\begin{equation*}
\I^{-1}= \frac{1}{\mbox{det}(\I)} \left ( \begin{array}{ccc} MN-Z^2 & ZL_1+NL_2 & -ZL_2-ML_1 \\ ZL_1+NL_2 & JN-L_1^2 &-L_1L_2-JZ  \\ -ZL_2-ML_1 & -L_1L_2-JZ  & JM-L_2^2 \end{array} \right ).
\end{equation*}

A long but straightforward calculation shows that,
by expressing $\omega, v_1$ and $v_2$ in terms of  $k, p_1, p_2$, substituting into \eqref{E:Constrained_2D_circulation},
and  enforcing the constraint $v_2=0$, one obtains:
\begin{equation}
\label{E:Working_Hydro_Sleigh_Equations}
\begin{split}
\dot \omega &=\frac{1}{D}\left (L_1 \omega + Z v_1 + \rho \alpha \right  ) \left ( L_2 \omega - Mv_1\right ),  \\
\dot v_1 &=\frac{1}{D} \left (L_1 \omega + Z v_1  + \rho \alpha \right ) \left (J\omega -L_2 v_1 \right ),
\end{split}
\end{equation}
where we set $D= \operatorname{det} (\I)( {\I^{-1}} )_{33}= MJ-L_2^2$. Note that $D>0$ since $\I$ and $\I^{-1}$ are positive definite.
Note as well that if $\alpha=0$ we recover
 the system with zero circulation treated in \cite{hydro-sleigh} so from now on we assume $\alpha \neq 0$.

The full motion of the sleigh on the plane is determined by the reconstruction equations
which, in our case with $v_2=0$, reduce to
\begin{equation*} \label{recon}
\dot \theta = \omega, \qquad \dot x=v_1\cos \theta, \qquad \dot y = v_1\sin \theta.
\end{equation*}

The reduced energy integral has
\begin{equation*}
H=\frac{1}{2}\left ( J\omega^2 +Mv_1^2 -2L_2\omega v_1 \right ),
\end{equation*}
and its level sets are ellipses on the $(\omega \, v_1)$-plane.

As seen from the equations,
the straight line $\ell=\{L_1 \omega + Z v_1 + \rho  \alpha=0\}$ consists of equilibrium points for the system.
Each of these equilibria corresponds to a uniform circular motion on the plane along a
circumference of radius $\left | \frac{v_1}{\omega} \right |$.

 Notice that if $Z=L_1=0$  the line $\ell$ of equilibria disappears.
In fact, it is shown in \cite{GN-V} that equations \eqref{E:Working_Hydro_Sleigh_Equations}
possess an invariant measure only for this choice of the parameters.
 In this particular case we obtain simple harmonic motion on the reduced plane $\omega, v_1$.

For the sequel we will assume that $Z$ and $L_1$ are not both zero. We shall see that
initial conditions with high energy yield asymptotic dynamics which is in agreement with
our statement  that there is no invariant measure in this case.

A level set of the energy will intersect once, twice or never the line of equilibra $\ell$ depending
on the value of $H$. One can show that there are two intersections if $H>h_0$, only one
intersection if $H=h_0$, and zero intersections if $H<h_0$, where
\begin{equation*}
h_0=\frac{1}{2}(\rho \alpha)^2 \frac{D}{E}
\end{equation*}
with $E=JZ^2+2ZL_1L_2+ML_1^2>0$.

Hence, the trajectories of (\ref{E:Working_Hydro_Sleigh_Equations}) are contained
in ellipses and they are of three types:
\begin{enumerate}
\item For small values of the energy, $0\leq H <h_0$, we have periodic motion on the ellipses.
\item For $H=h_0$ we have a homoclinic connection that separates periodic from asymptotic trajectories.
\item For $H>h_0$ we have heteroclinic connections
between the asymptotically unstable and stable equilibria on $\ell$.
\end{enumerate}

We introduce two other special energy values
$h_1, h_2\geq h_0$, for which the corresponding energy contour line intersects the equilibra
line $\ell$ at the axes $\omega=0$ and $v_1=0$ respectively. Namely,
\begin{equation}
\label{E:h_1h_2}
h_1=\frac{1}{2} \, \frac{(\rho \alpha)^2 M}{Z^2}, \qquad h_2=\frac{1}{2} \, \frac{(\rho \alpha)^2 J}{L_1^2} .
\end{equation}

The phase portrait is illustrated in Figure \ref{F:phase-portrait} (a).

\begin{figure}[ht]
\centering
\subfigure[Reduced phase portrait. The stable and unstable equilibra are represented by filled and empty dots, respectively.]{\includegraphics[width=5.1cm]{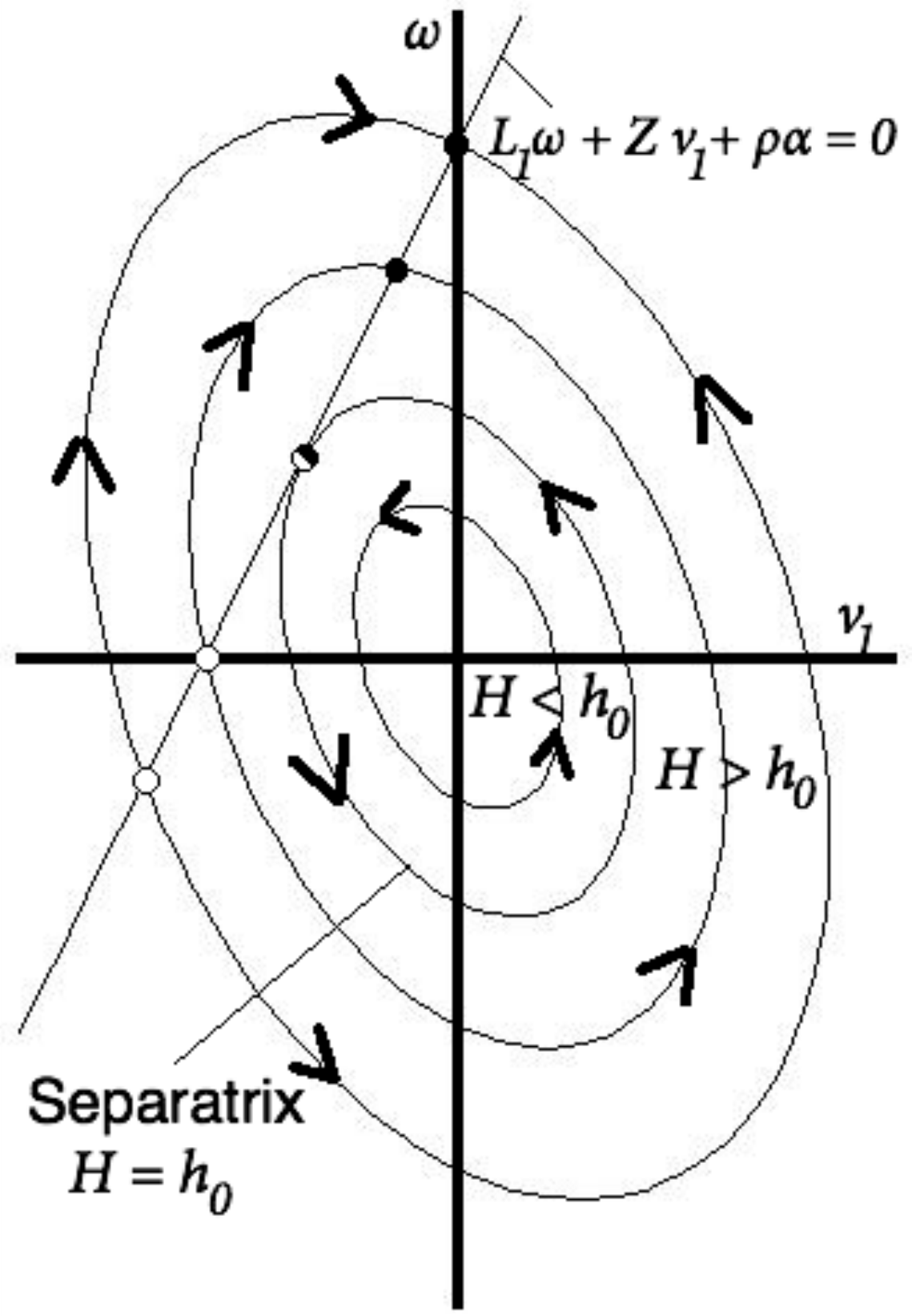}} \qquad \qquad
\subfigure[The 7 regions in the reduced phase space. ]{\includegraphics[width=5.7cm]{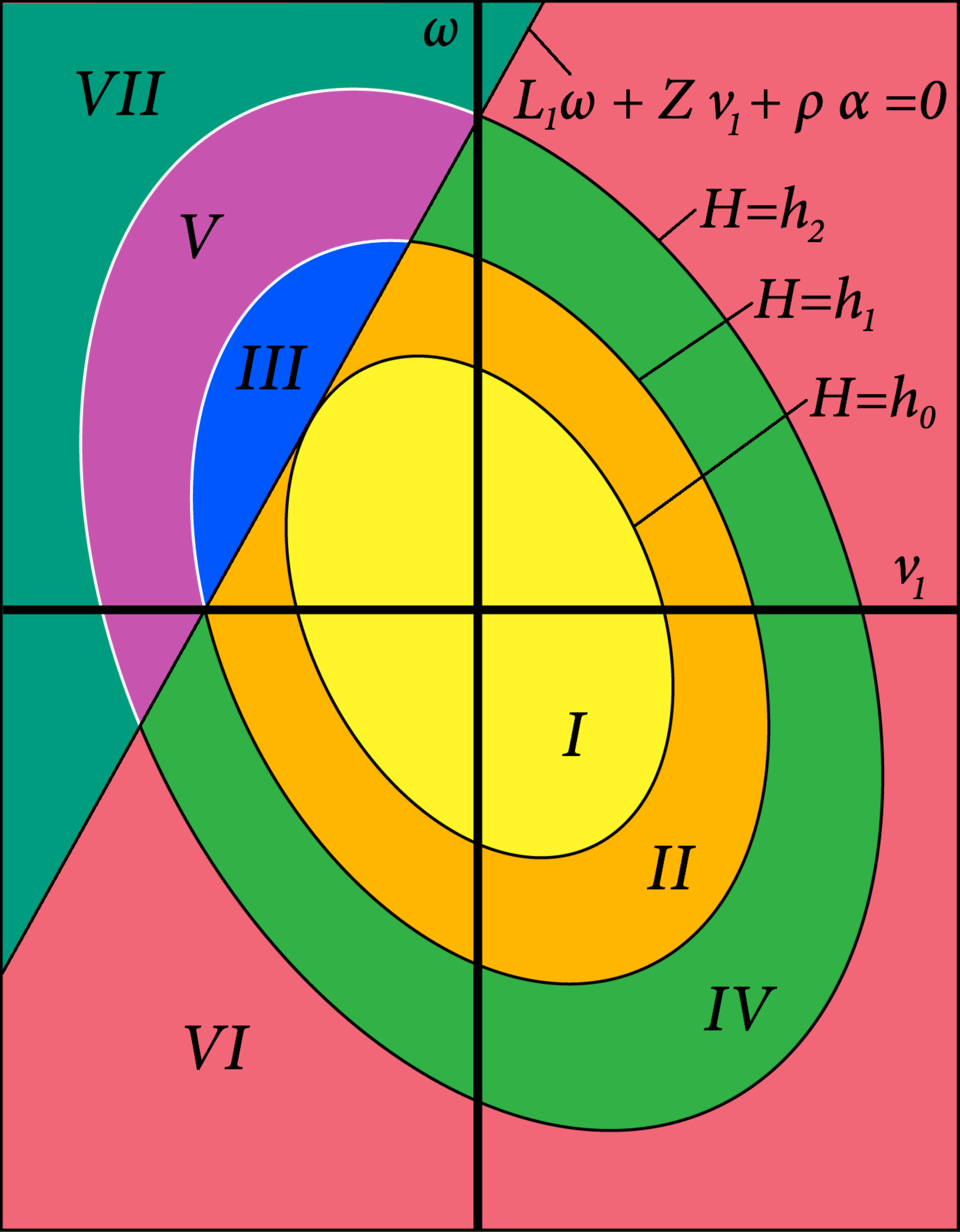}}
\caption{Reduced phase portrait  and the 7 regions in phase space (assuming that  $L_1>0$,  $\alpha, Z <0$, and
$h_1< h_2$).  \label{F:phase-portrait}}
\end{figure}


\paragraph{Remark 1.} In fact, the reduced system (\ref{E:Working_Hydro_Sleigh_Equations}) can be checked to be Hamiltonian
with respect to the following Poisson bracket of functions of $\omega, v_1$
\begin{equation*}
\{F_1,F_2 \}:= -\frac{1}{D}\left (L_1 \omega + Z v_1 + \rho \alpha \right ) \left ( \frac{\partial F_1}{\partial \omega} \frac{\partial F_2}{\partial v_1} -
\frac{\partial F_1}{\partial v_1} \frac{\partial F_2}{\partial \omega} \right ).
\end{equation*}
The invariant symplectic leaves  consist of the semi-planes separated by the equilibria line $\ell$
and the zero-dimensional leaves formed by the points on $\ell$.
\medskip

\subsection{The motion of the sleigh on the plane: qualitative description}

As for the reduced dynamics, the qualitative motion of the sleigh  on the plane depends crucially on the energy value.
We distinguish 7 regions on the reduced phase space as illustrated in Figure \ref{F:phase-portrait} (b). The regions
 are separated by the homoclinic orbit corresponding to the energy contour $H=h_0$, the two pairs of heteroclinic orbits
 corresponding to the energy contour lines $H=h_1$ and $H=h_2$, and the line of equilibra $\ell$. It is thus
 clear that the regions are invariant by the flow.
 We shall see that
the qualitative motion of the sleigh is different in each region.
 The number of regions can change
for special parameter values that make any of $h_0, h_1, h_2$ coincide.
We shall discuss the behavior
in the interior of the regions in the generic case where they are all different and under the assumption that $h_1<h_2$.

\paragraph{Motion in Region I.} For subcritical values of the energy, $0\leq H <h_0$,
the reduced dynamics on $(v_1,\omega)$ is periodic, but
the motion on the sleigh is generally not periodic. Let $T$ be the (minimal) period of the reduced dynamics
that depends on the energy value $H$ in a way that will be made precise in section \ref{S:solution}.
Under the shift $t \to t+T$ 
the coordinates of the sleigh undergo the increments
\begin{equation}
\label{E:Def_Phase_Shift}
\Delta x=\int_0^Tv_1(t)\cos\theta(t)\, dt, \qquad \Delta y=\int_0^Tv_1(t)\sin \theta(t)\, dt, \qquad
\Delta \theta =\int_0^T\omega(t)\, dt.
\end{equation}
These quantities completely determine the type of motion of the sleigh.

\begin{theorem}
\label{T:Reconstruction_periodic}
 Let $T$ be the period of a periodic solution to the reduced system \eqref{E:Working_Hydro_Sleigh_Equations} with $H<h_0$,
 and let ${\bf \Delta x}=(\Delta x, \Delta y)$ and $\Delta \theta$ be defined
by \eqref{E:Def_Phase_Shift}. Then, the motion of the sleigh on the plane is
\begin{enumerate}
\item $T$-periodic if $\frac{\Delta \theta}{2\pi}\in \mathbb{Z}$ and ${\bf \Delta x}=0$.
\item Unbounded if $\frac{\Delta \theta}{2\pi}\in \mathbb{Z}$ and ${\bf \Delta x}\neq 0$.
\item Contained in a circle or an  annulus if  $\frac{\Delta \theta}{2\pi}\notin \mathbb{Z}$, where the motion
\begin{enumerate}
\item is periodic of period $qT$ if $\frac{\Delta \theta}{2\pi}$ is a rational number written in
terms of integers $p,q$ as $\frac{\Delta \theta}{2\pi}=\frac{p}{q}$ in irreducible form,
\item fills up the annulus densely if $\frac{\Delta \theta}{2\pi}$ is irrational.
\end{enumerate}
\end{enumerate}
Moreover, the above types of the motion depend only on the energy value $H$. In particular, they
are independent of the initial configuration of the sleigh.
\end{theorem}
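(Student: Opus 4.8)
The plan is to recognize the reconstruction equations as a left-invariant flow on $SE(2)$ and to reduce everything to the algebraic structure of the associated \emph{monodromy} element. Writing $g(t)=(R_{\theta(t)},{\bf x}(t))\in SE(2)$ and $\xi(t)=(\omega(t),v_1(t),0)\in\se(2)$, the reconstruction equations are exactly $\dot g=g\,\xi$, where $\xi(t)$ is $T$-periodic because $(\omega,v_1)$ traces the closed reduced orbit with period $T$. The first step is the key observation that the element $\Phi(t):=g(t+T)\,g(t)^{-1}$ is independent of $t$: differentiating and using $\dot g=g\xi$ together with $\xi(t+T)=\xi(t)$ gives $\dot\Phi=0$. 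Hence $g(t+T)=\Phi\,g(t)$ with $\Phi=g(T)g(0)^{-1}$ a fixed element of $SE(2)$, and $g(nT)=\Phi^n g(0)$. Since $\dot\theta=\omega$ is $T$-periodic, $\theta(t+T)=\theta(t)+\Delta\theta$, so the rotational part of $\Phi$ is $R_{\Delta\theta}$; when $\Delta\theta\in2\pi\mathbb Z$ a short computation identifies the translational part of $\Phi$ with ${\bf \Delta x}$.

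The next step is to read off the three cases from the $SE(2)$-type of $\Phi$. If $\Delta\theta\in2\pi\mathbb Z$ then $R_{\Delta\theta}=I$ and $\Phi=(I,{\bf \Delta x})$ is a pure translation. If moreover ${\bf \Delta x}=0$ then $\Phi=e$, so $g(t+T)=g(t)$ and the motion is $T$-periodic (case 1); otherwise ${\bf x}(nT)={\bf x}(0)+n\,{\bf \Delta x}$ escapes to infinity, so the motion is unbounded (case 2). If instead $\Delta\theta\notin2\pi\mathbb Z$, then $I-R_{\Delta\theta}$ is invertible and $\Phi$ is a rotation by $\Delta\theta$ about the unique fixed point ${\bf p}=(I-R_{\Delta\theta})^{-1}{\bf c}$, where ${\bf c}$ is the translational part of $\Phi$. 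Writing $t=nT+\tau$ with $\tau\in[0,T)$ gives $g(t)=\Phi^n g(\tau)$, so the whole trajectory is the union over $n$ of the single-period position curve $\gamma=\{{\bf x}(\tau):\tau\in[0,T]\}$ rotated about ${\bf p}$ through the angles $n\Delta\theta$. In particular it lies in the annulus $\rho_{\min}\le|{\bf x}-{\bf p}|\le\rho_{\max}$, where $\rho_{\min},\rho_{\max}$ are the extreme values of $|{\bf x}(\tau)-{\bf p}|$ (a circle in the degenerate case $\rho_{\min}=\rho_{\max}$).

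It then remains to split the rotational case by the arithmetic of $\Delta\theta/2\pi$. If $\Delta\theta/2\pi=p/q$ in lowest terms, then $q\Delta\theta\in2\pi\mathbb Z$ forces $\Phi^q=e$, whence $g(t+qT)=g(t)$ and the motion is periodic; minimality of the period $qT$ follows from minimality of $T$ for the reduced orbit together with $\gcd(p,q)=1$ (if $g(t+kT)=g(t)$ then $\Phi^k=e$, which requires $q\mid k$). This is case 3(a). If $\Delta\theta/2\pi$ is irrational, the set $\{n\Delta\theta\bmod2\pi\}$ is dense in $[0,2\pi)$; combined with the intermediate value theorem applied to $\tau\mapsto|{\bf x}(\tau)-{\bf p}|$ (so that $\gamma$ meets every radius in $[\rho_{\min},\rho_{\max}]$), the rotated copies of $\gamma$ are dense in the annulus, giving case 3(b). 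Finally, the independence statement follows because $\Phi$ is unchanged under a shift of base time (we proved $g(t+T)=\Phi g(t)$ for all $t$), while a change of initial spatial configuration $g(0)\mapsto g_0g(0)$ merely conjugates $\Phi\mapsto g_0\Phi g_0^{-1}$ by left-invariance; conjugation preserves $\Delta\theta$ and the translation length $|{\bf \Delta x}|$, and for $H<h_0$ the reduced orbit is the unique energy-$H$ ellipse, so both $\Delta\theta=\oint\omega\,dt$ and $|{\bf \Delta x}|$ are determined by $H$ alone.

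The main obstacle I expect is twofold, and both parts are geometric rather than computational. First, correctly extracting and proving constancy of the monodromy $\Phi$ in the left-invariant setting: the left-versus-right conventions must be handled with care, since $g(t)^{-1}g(t+T)$ is \emph{not} constant, only $g(t+T)g(t)^{-1}$ is. Second, the density argument in case 3(b), where one must iterate an entire curve $\gamma$ under rotation rather than a single point, and combine equidistribution of $\{n\Delta\theta\}$ with connectedness of $\gamma$ to conclude density in the full annulus.
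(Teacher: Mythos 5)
Your proposal is correct and takes essentially the same route as the paper: both reduce the theorem to the $SE(2)$-type of a constant monodromy element (the paper writes $g_{n+1}=g_n S$ after normalizing $g(0)=e$, which agrees with your left monodromy $g(t+T)=\Phi\,g(t)$ since then $g_n=\Phi^n$), classify it as the identity, a pure translation, or a rotation about a fixed point $Q$, and conclude via density of irrational rotations. Your write-up merely supplies details the paper delegates to ``theorems of kinematics'' --- the explicit constancy proof for $\Phi$, the fixed-point formula $(I-R_{\Delta\theta})^{-1}{\bf c}$, the intermediate-value/equidistribution argument for density in the annulus, and the conjugation argument for independence of the initial configuration.
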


\begin{proof} Assume without loss of generality that at $t=0$ the body and the space reference frames coincide.
Then $\theta(0)=0$, and the element
$$
S =\left (
\begin{array}{ccc}
\cos (\Delta\theta) & -\sin (\Delta\theta) & \Delta x \\
\sin (\Delta\theta) &\cos (\Delta\theta) & \Delta y \\
0 & 0 & 1
\end{array}\right ) \in SE(2)
$$
describes the shift in both reference frames after one period. Then the elements
\begin{equation*}
g_n =\left (
\begin{array}{ccc}
\cos \theta (n T) & -\sin \theta (nT) & x(nT) \\
\sin \theta (nT) &\cos \theta (nT) & y(nT) \\
0 & 0 & 1
\end{array}\right )\in SE(2), \qquad n\in {\mathbb Z}
\end{equation*}
describing the subsequent positions of the sleigh in the space frame, are the right translations on $SE(2)$:
$$
g_{n+1} = g_n S.
$$
Following theorems of kinematics,
each generic transformation of that kind is a rigid rotation about
a {\it fixed} point $Q$ on the plane $(x,y)$ by the angle $\Delta\theta$. This happens when
$\frac{\Delta \theta}{2\pi} \notin \mathbb{Z}$.
Then the trajectory of the contact point lies inside of an annulus or a circle
with the center $Q$.
The trajectory is periodic iff $\frac{\Delta \theta}{2\pi}$ is a rational number, otherwise it fills the domain
densely.

In the special case $\frac{\Delta \theta}{2\pi} \in \mathbb{Z}$ the transformation $g_n\to g_{n+1}$
is a parallel translation
(if ${\bf \Delta x}\ne 0$), which is obviously unbounded, or is the identity (when ${\bf \Delta x}= 0$), then the
trajectory is periodic. This leads to the items of the theorem.

Finally note that the components of $S$ depend only on the energy value and not on the particular initial conditions.
\end{proof}

We mention that the matrix $S$ in the proof is the (right) monodromy matrix of the periodic linear system defined
by the reconstruction equations. The global description of the invariant manifolds in the unreduced space, that generically are two-tori, can be studied with techniques
similar to those developed in
\cite{fasso}.

The types of motion indicated in items 2 and 3 of the above theorem are illustrated in Figure \ref{F:Trajectory}.
The generic behavior of the sleigh  corresponds to  item 3 (b) (Figure \ref{F:Trajectory} (c)).

Equation \eqref{E:Deltaphi} in section \ref{S:solution} below gives the explicit dependence of
$\Delta \theta$ on the energy.

 \begin{figure}[ht]
\centering
\subfigure[Trajectory of the contact point with $\Delta \theta=-4\pi$, and $\Delta x, \Delta y \neq 0$ (unbounded
motion).]{\includegraphics[width=3.5cm]{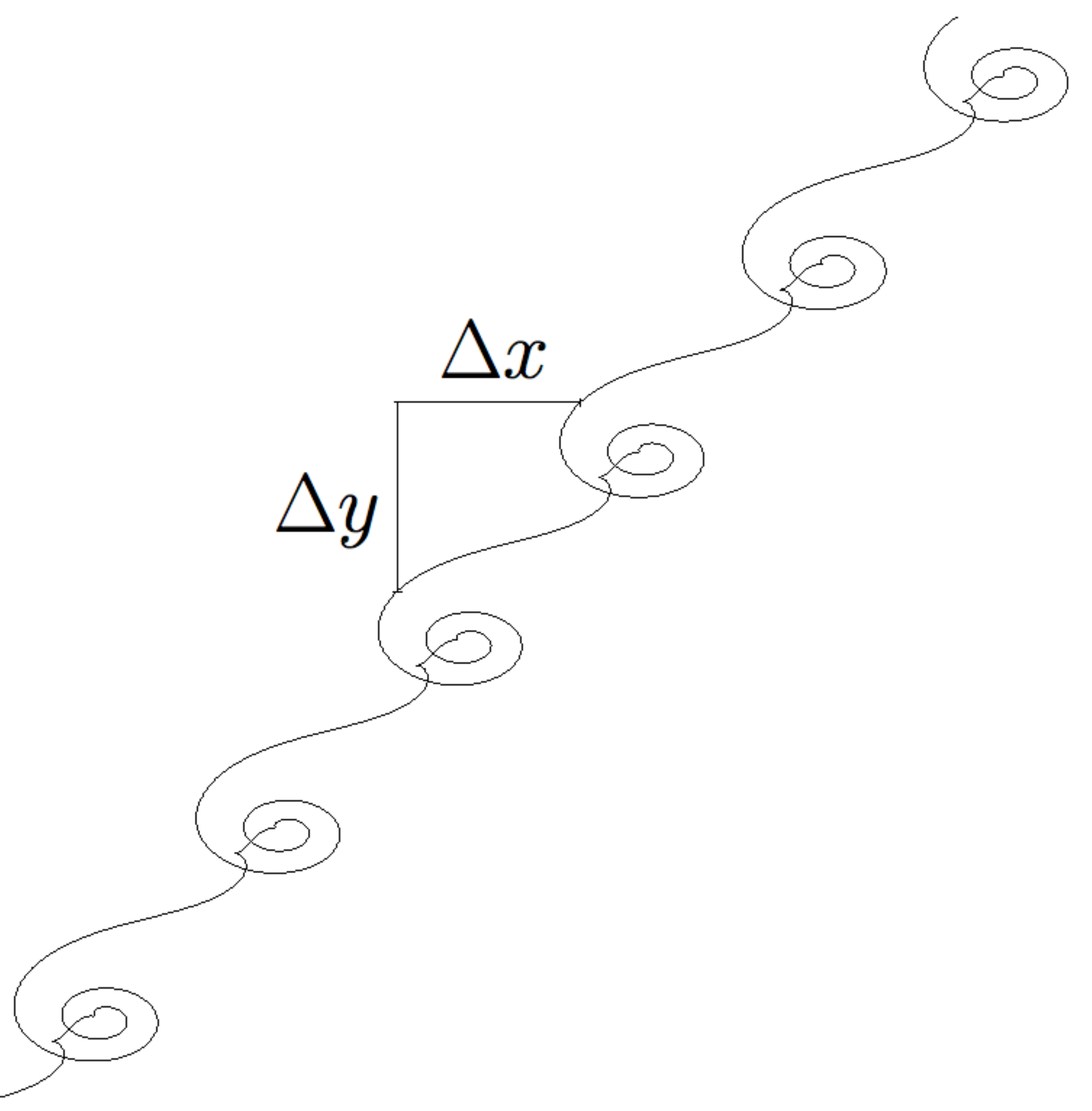}} \qquad
\subfigure[Trajectory of the contact point with $\Delta \theta = -\frac{11\pi}{7}$ (periodic motion in an annulus).]{\includegraphics[width=4cm]{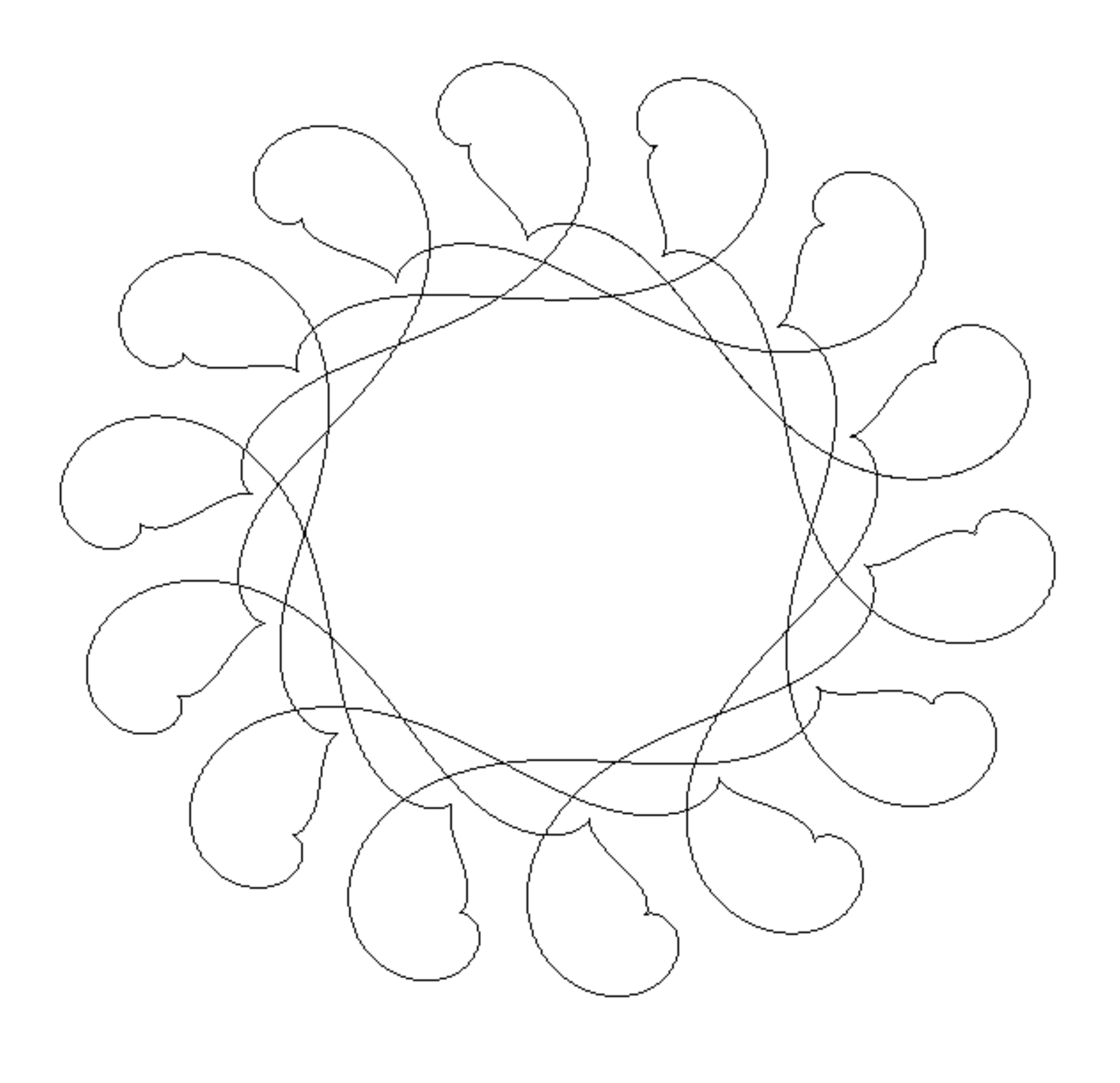}} \qquad
\subfigure[Trajectory of the contact point with $\Delta \theta = -\frac{e\pi}{2}$ for a finite time (the trajectory fills the annulus densely).]{\includegraphics[width=4cm]{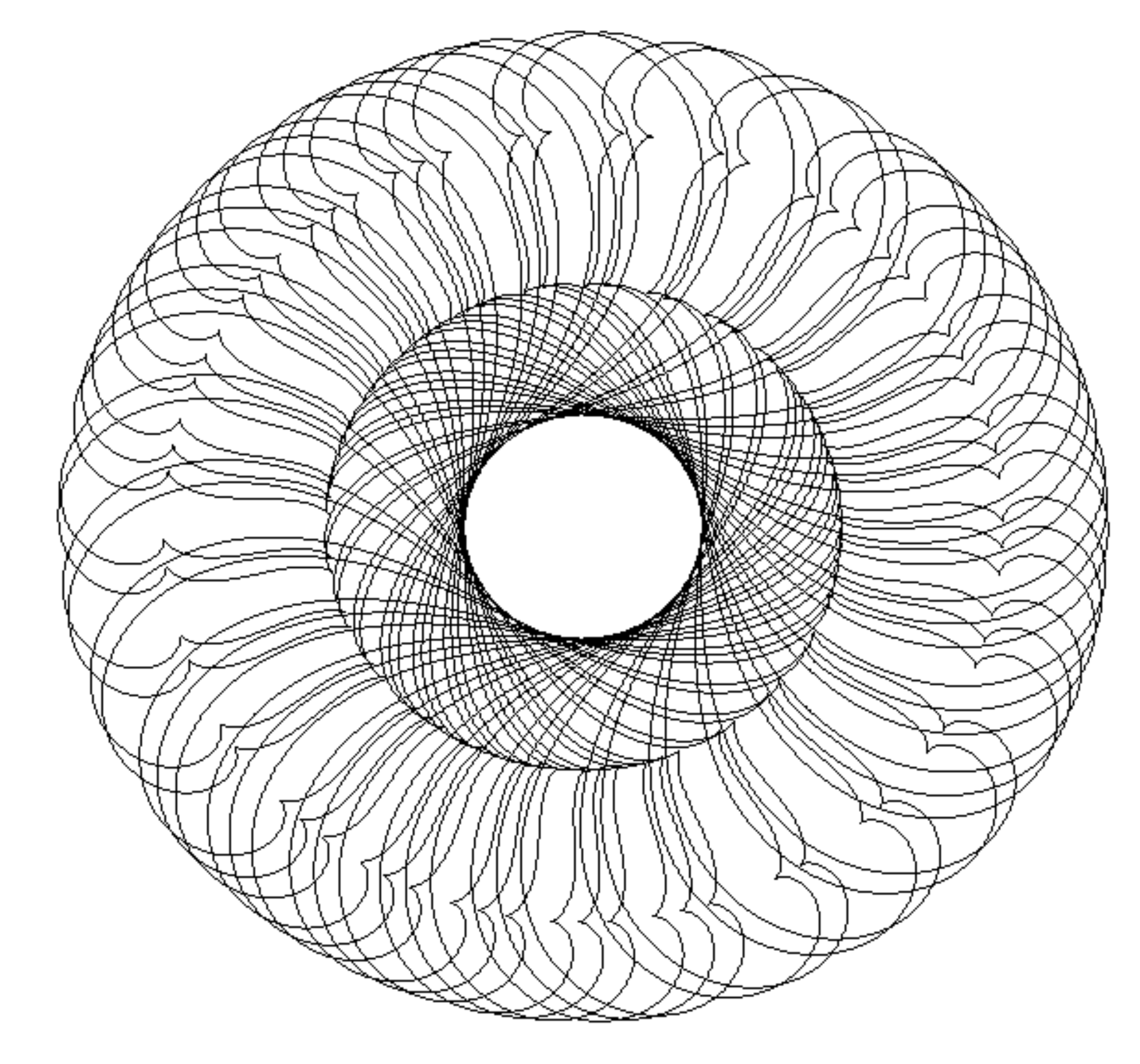}}
\caption{Trajectory of the sleigh on the plane in region I ($H<h_0$) for different values of $\Delta \theta$.  \label{F:Trajectory}}
\end{figure}

Now notice that in the case $h\ge h_0$, the equilibria points $(\omega, v_1)$
on the line $\ell=\{L_1\omega +Zv_1+\rho \alpha =0\}$ correspond to circular motion of the sleigh
in the direction determined by the sign of $\omega$ (clockwise if $\omega >0$ and anti-clockwise if  $\omega <0$).
The circles describing by the contact point have radius $|v_1/\omega|$.

The exceptional cases when $\omega =0$ or $v_1=0$ respectively correspond to motion along a straight line
or to a spinning motion of the sleigh about the contact point that remains fixed.
These cases correspond respectively to the  energy values $h_1$ and $h_2$. The motion of the
sleigh on the plane when the energy attains the critical value $h_0$ will be discussed in section \ref{S:solution}.

\paragraph{Motion in Region II.} The sleigh evolves asymptotically from one circular motion to another. The
limit circles have different radii. Both of them are traversed counterclockwise (at the limit points in the region one has $\omega>0$). Moreover, the trajectory of the point of contact has two cusps corresponding
to the two intersections with the axes $v_1=0$ on the Figure \ref{F:phase-portrait}.

\paragraph{Motion in Region III.} The same behavior as in Region II, however, this time the
trajectory has no cusps (since $v_1$ never vanishes).

\paragraph{Motion in Region IV.} The sleigh again evolves asymptotically from one circular motion to another. The
limit circles have different radius and are are traversed in {\it opposite} directions.
The trajectory of the contact point has two cusps corresponding to the two intersections with the axes $v_1=0$ on the
plane $(v_1,\omega)$.

\paragraph{Motion in Region V.} The same behavior as in Region IV, but the
trajectory has no cusps since $v_1$ never vanishes.

\paragraph{Motion in Regions VI and VII.} The same behavior as in Region IV, however
the trajectories have exactly one cusp corresponding to the intersection of the axis $v_1=0$.

\begin{figure}[ht]
\label{F:Trajectory-Regions234}
\centering
\subfigure[Region II.]{\includegraphics[width=4.2cm]{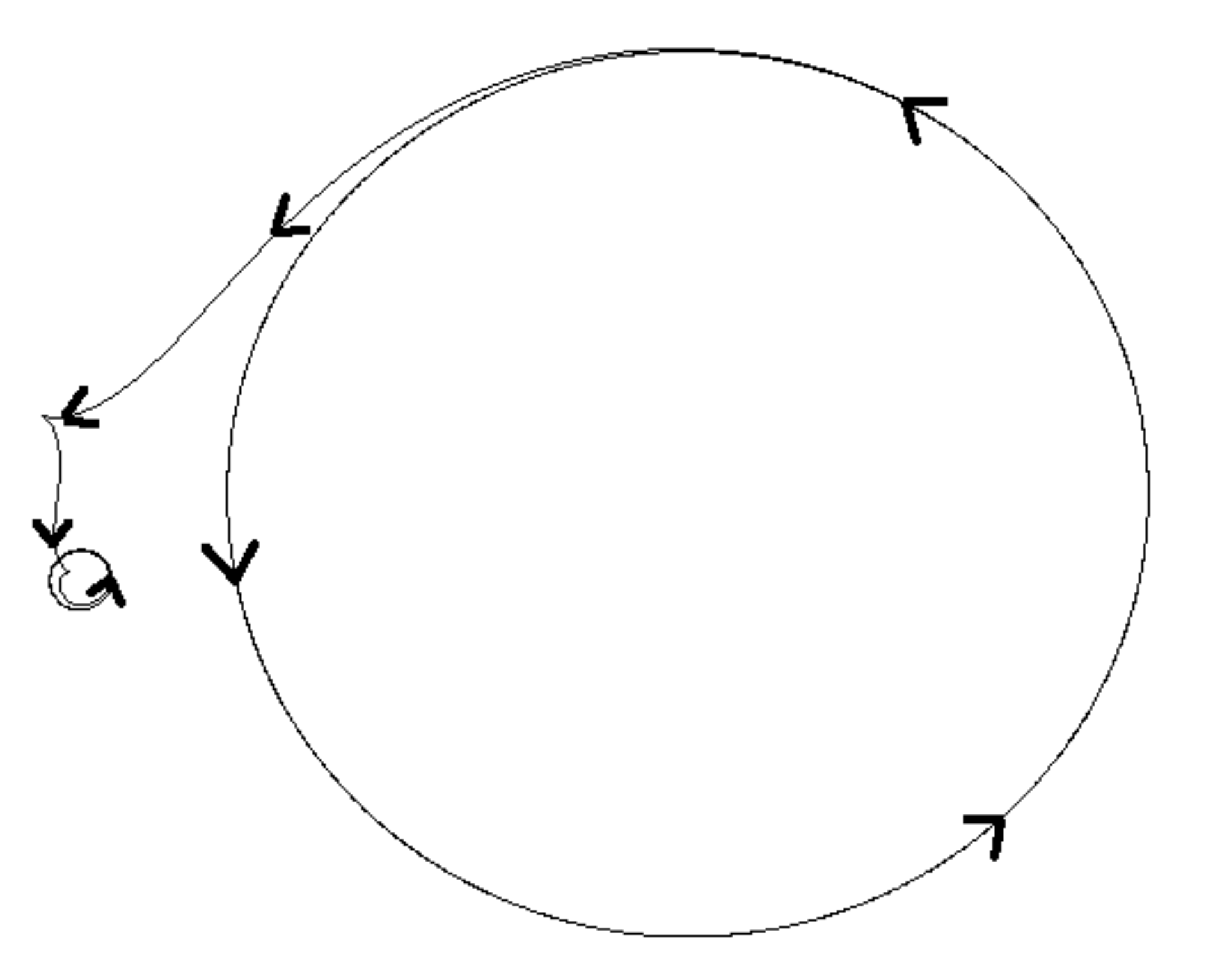}} \qquad
\subfigure[Region III.]{\includegraphics[width=3.5cm]{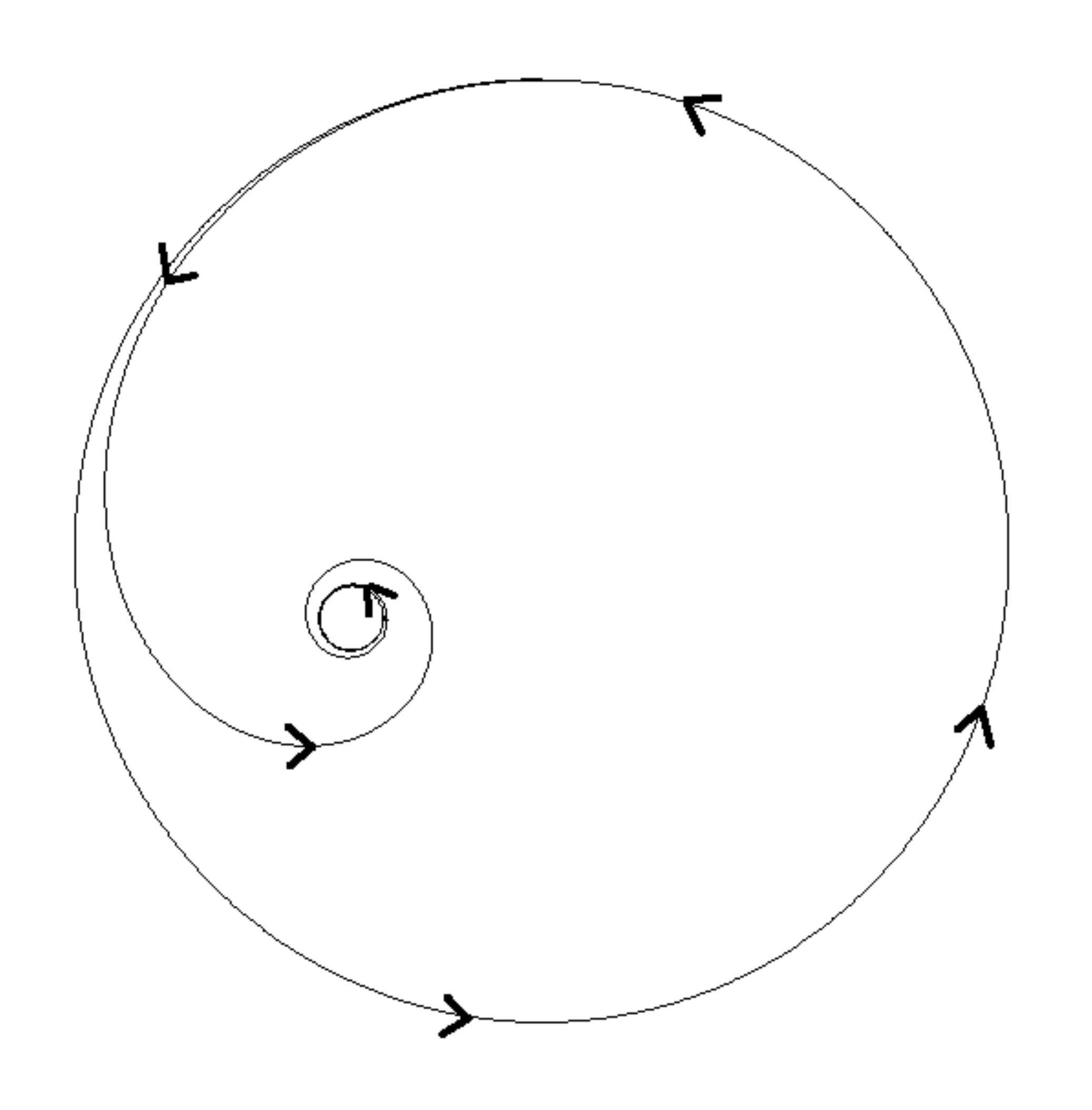}} \qquad
\subfigure[Region IV.]{\includegraphics[width=5.8cm]{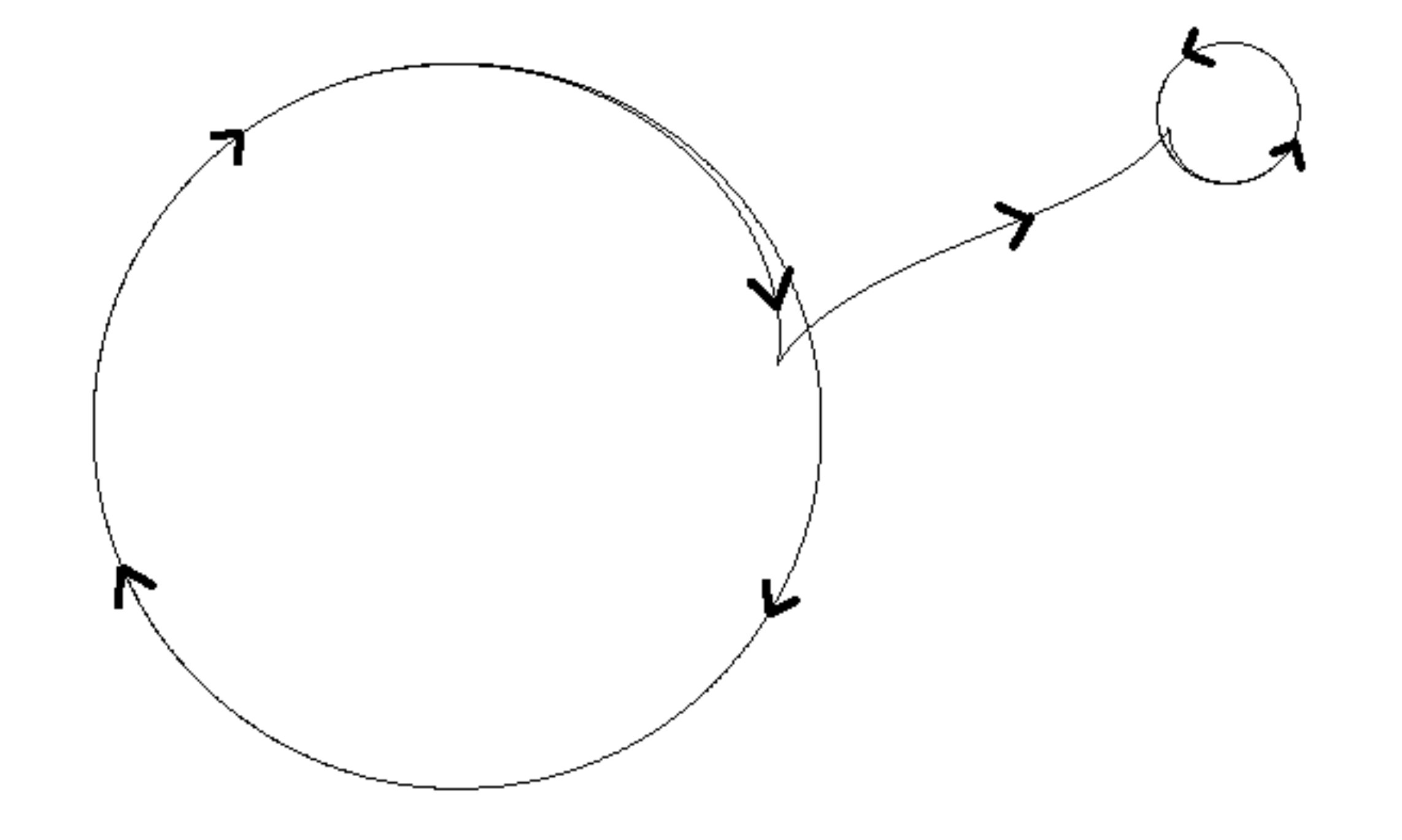}}
\caption{Trajectories of the sleigh for initial conditions in regions II, III, and IV.}
\end{figure}

 \begin{figure}[ht]
 \label{F:Trajectory-Regions234}
\centering
\subfigure[Region V.]{\includegraphics[width=5cm]{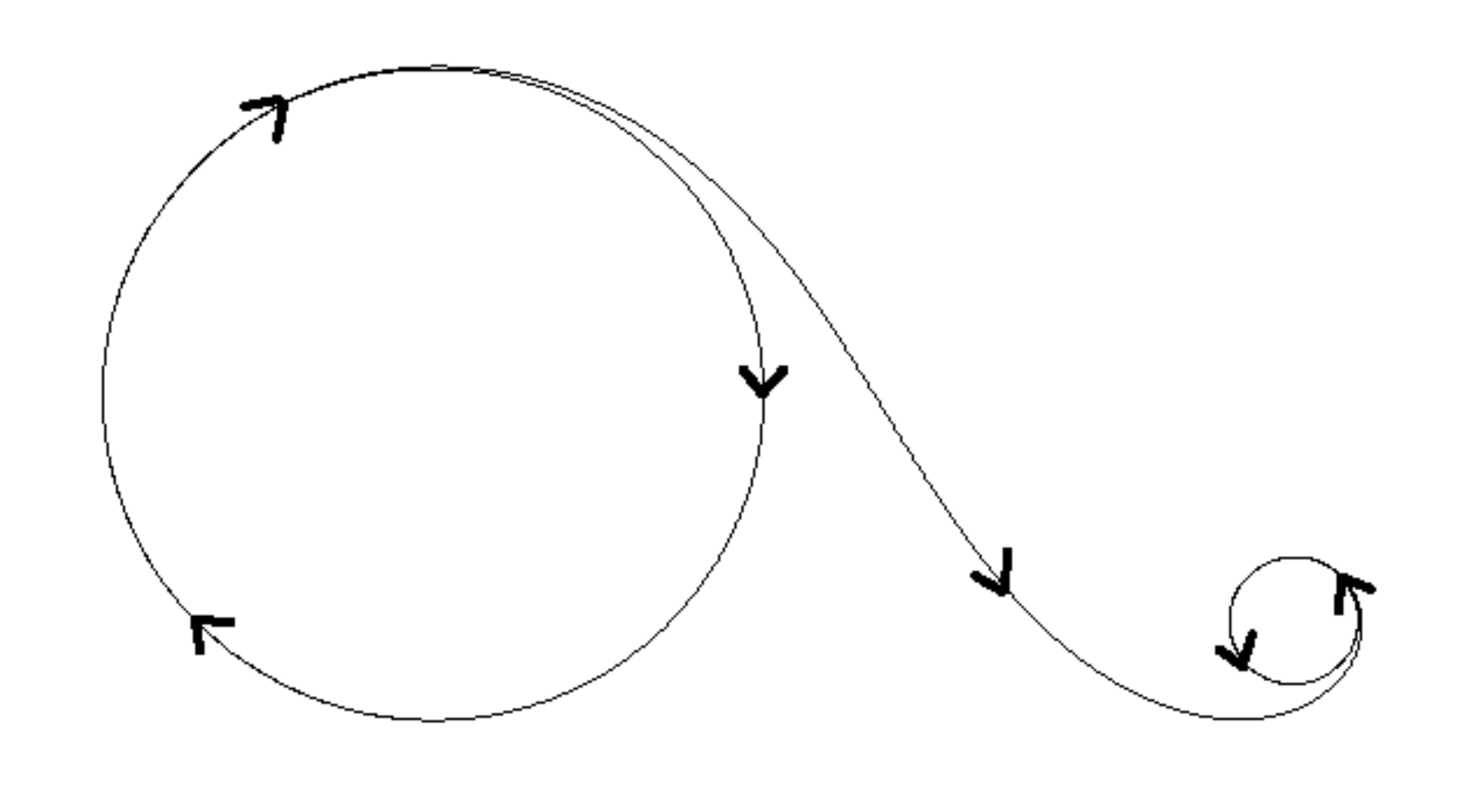}} \qquad
\subfigure[Region VI.]{\includegraphics[width=5cm]{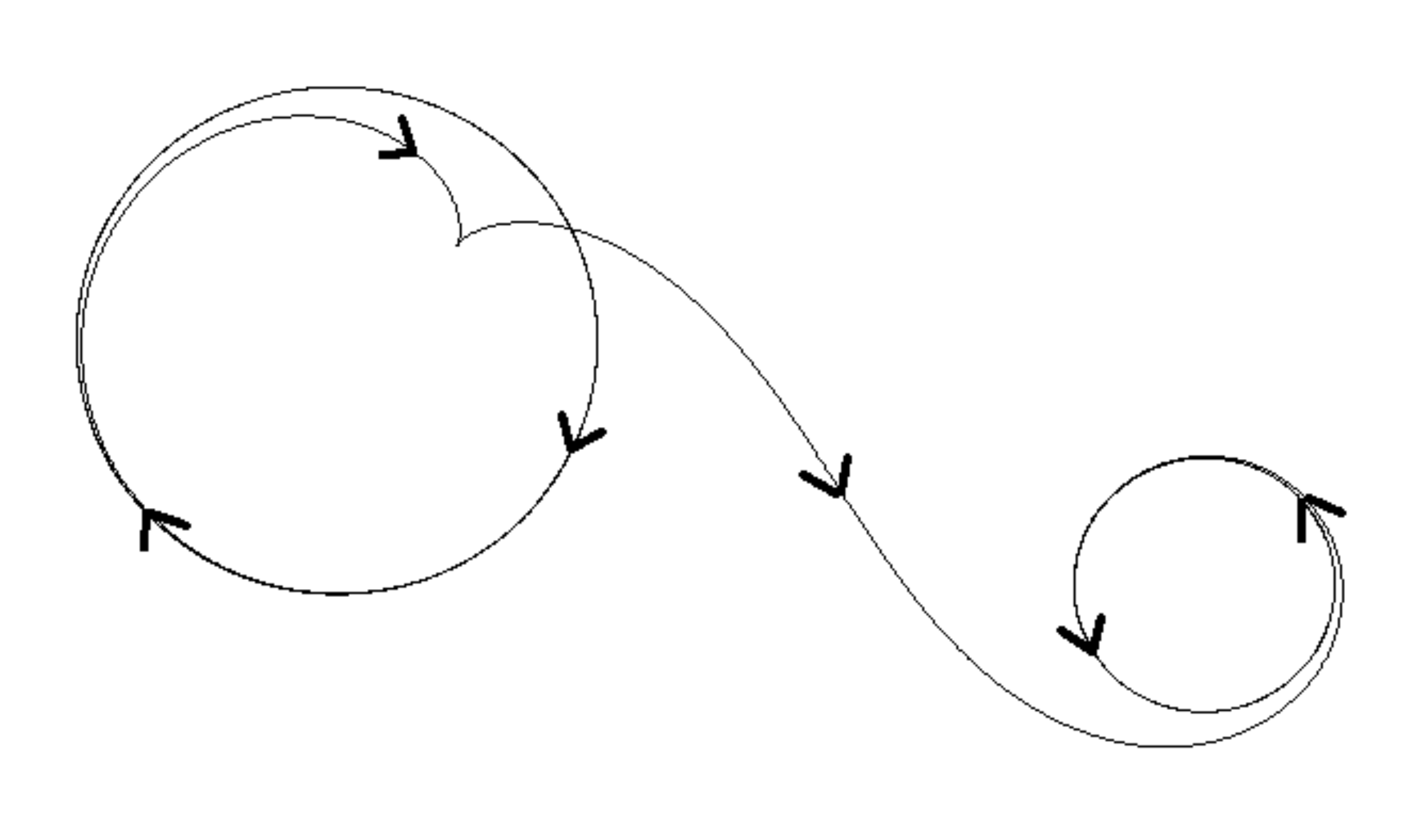}} \qquad
\subfigure[Region VII.]{\includegraphics[width=4.6cm]{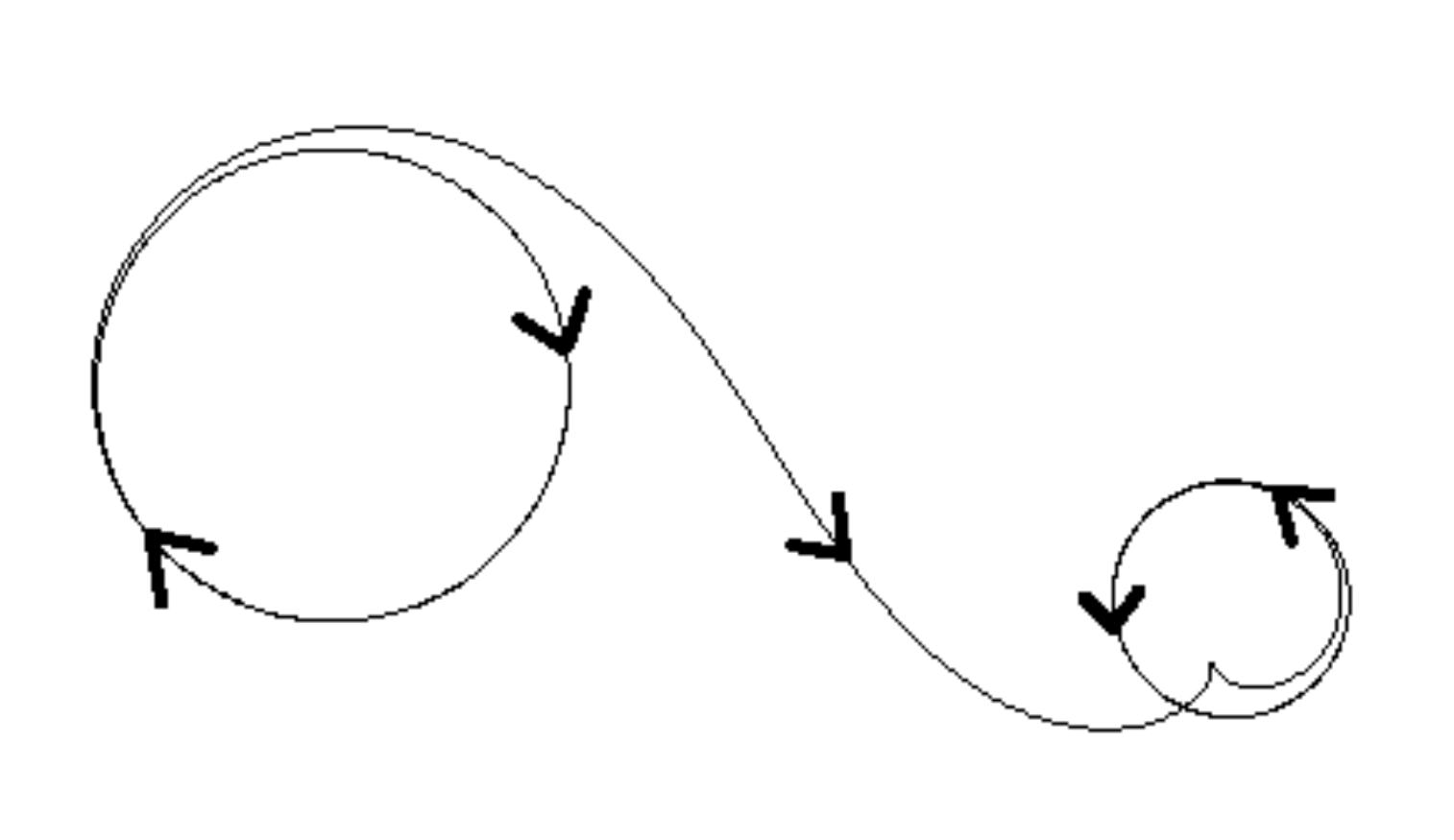}}
\caption{The trajectories for initial conditions in regions V, VI, and VII.}
\end{figure}

Thus we see that the qualitative behavior of the motion of the sleigh on the plane can be sensitive to initial
conditions. This contrasts with the properties of other completely solvable classical nonholonomic systems
like the Suslov problem or the hydrodynamic Chaplygin sleigh without circulation. For these systems
one can obtain  formulas for the parameters that determine the long term behavior of the dynamics, and such
formulas are independent of the energy (see \cite{hydro-sleigh, Suslov-Fedorov}).

At the physical level, the motion of the sleigh can be understood as the interplay of two effects, the circulation of
the fluid around the body, and the inertia of the body. For low energies ($H<h_0$) the dynamics are driven by the circulation
and produce periodic motion in the reduced space. For high energies ($H>h_0$) the inertia of the body takes over and the asymptotic dynamics
of the body on the plane resemble the motion of the hydrodynamic
Chaplygin sleigh in the absence of circulation considered in \cite{hydro-sleigh}.

We finally stress that all the above types of motion also appear in
the classical Appel--Korteweg problem on the rolling of a non-vertical disc. Namely the
trajectories of the contact point of the disc admit the same classification (see \cite{App, LCA1952, Fed87}).

\section{Explicit solution of the reduced system in $\se (2)^*$}
\label{S:solution}

We will only treat the case $L_2=0$ when the system \eqref{E:Working_Hydro_Sleigh_Equations}
takes the form
\begin{equation}
\label{E:Simp_Working_Hydro_Sleigh_Equations}
\begin{split}
\dot \omega &=\frac{1}{D}\left (L_1 \omega + Z v_1 + \rho \alpha \right  ) \left ( - Mv_1\right ),  \\
\dot v_1 &=\frac{1}{D} \left (L_1 \omega + Z v_1  + \rho \alpha \right ) \left (J\omega  \right ),
\end{split}
\end{equation}
with $D=MJ$. The energy takes the diagonal form $H=\frac{1}{2}(J\omega^2 +Mv_1^2)$. The general
case can be reduced to this one via an invertible linear change of variables as the following proposition shows.

\begin{proposition}
There exists a $2\times 2$ rotation matrix $P\in SO(2)$ such that if $\omega$ and $v_1$ satisfy
\eqref{E:Working_Hydro_Sleigh_Equations} then $(\tilde \omega, \tilde v_1)^T:=P( \omega,  v_1)^T$ satisfy
\eqref{E:Simp_Working_Hydro_Sleigh_Equations} with
coefficients $\tilde L_1, \tilde Z, \tilde M, \tilde J$, where
$\tilde J, \tilde M>0$ satisfy  $\tilde M \tilde J =D$, and  $(\tilde L_1, \tilde Z)^T:=P(L_1,Z)^T$.
\end{proposition}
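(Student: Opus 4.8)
The plan is to read $P$ off the spectral decomposition of the energy form and then verify that this same rotation automatically puts the vector field into the simplified shape. Write $\mathbf{x}=(\omega,v_1)^T$, so that the reduced energy becomes $H=\tfrac12\mathbf{x}^TA\mathbf{x}$ with the symmetric matrix $A=\begin{pmatrix}J&-L_2\\-L_2&M\end{pmatrix}$. Since $M,J>0$ and $\det A=MJ-L_2^2=D>0$, the matrix $A$ is positive definite. By the spectral theorem its eigenvectors form an orthonormal basis, and after orienting that basis so the change-of-basis matrix has determinant $+1$ I obtain $P\in SO(2)$ with $PAP^T=\operatorname{diag}(\tilde J,\tilde M)$, whose eigenvalues satisfy $\tilde J,\tilde M>0$ and $\tilde J\tilde M=\det A=D$. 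With $\tilde{\mathbf{x}}=P\mathbf{x}$ the energy reads $\tfrac12(\tilde J\tilde\omega^2+\tilde M\tilde v_1^2)$, as required in \eqref{E:Simp_Working_Hydro_Sleigh_Equations}.

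The crux is the observation that the linear part of \eqref{E:Working_Hydro_Sleigh_Equations} factors through $A$. Setting $\ell:=L_1\omega+Zv_1+\rho\alpha$, a direct check gives
\[
\dot{\mathbf{x}}=\frac{\ell}{D}\,KA\,\mathbf{x},\qquad K=\begin{pmatrix}0&-1\\1&0\end{pmatrix},\quad A=\begin{pmatrix}J&-L_2\\-L_2&M\end{pmatrix},
\]
since $KA\mathbf{x}=(L_2\omega-Mv_1,\,J\omega-L_2v_1)^T$ reproduces exactly the two bracketed factors in \eqref{E:Working_Hydro_Sleigh_Equations}. Applying $P$ and inserting $P^TP=I$ yields $\dot{\tilde{\mathbf{x}}}=\tfrac{\ell}{D}(PKP^T)(PAP^T)\tilde{\mathbf{x}}$. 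Here I use that $SO(2)$ is abelian, so $P$ commutes with the rotation $K$ and $PKP^T=K$. Hence $\dot{\tilde{\mathbf{x}}}=\tfrac{\ell}{D}\,K\operatorname{diag}(\tilde J,\tilde M)\,\tilde{\mathbf{x}}=\tfrac{\ell}{D}(-\tilde M\tilde v_1,\,\tilde J\tilde\omega)^T$, which is precisely the right-hand side of \eqref{E:Simp_Working_Hydro_Sleigh_Equations} once $\ell$ is expressed in the new variables.

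Finally I rewrite $\ell$. From $\mathbf{x}=P^T\tilde{\mathbf{x}}$ we get $\ell=(L_1,Z)P^T\tilde{\mathbf{x}}+\rho\alpha=(\tilde L_1,\tilde Z)\tilde{\mathbf{x}}+\rho\alpha=\tilde L_1\tilde\omega+\tilde Z\tilde v_1+\rho\alpha$, where $(\tilde L_1,\tilde Z)^T:=P(L_1,Z)^T$ is exactly the transformation law in the statement. This closes the argument.

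I do not expect a genuine obstacle: the only substantive ingredients are the factorization $KA$ of the linear part and the commutation $PK=KP$, after which diagonalizing the energy does all the remaining work, and the claims $\tilde J,\tilde M>0$, $\tilde J\tilde M=D$ are immediate from $A>0$ and $\det A=D$. The one point to track carefully is the bookkeeping of conventions: $\tilde{\mathbf{x}}=P\mathbf{x}$ forces the quadratic form $A$ to transform as $P(\cdot)P^T$ but the covector $(L_1,Z)$ as $(\cdot)P^T$, so that the energy and the affine function $\ell$ land in the asserted form simultaneously.
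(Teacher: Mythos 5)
Your proposal is correct and follows essentially the same route as the paper's own proof: you diagonalize the restricted inertia tensor $\I_c$ by a rotation $P\in SO(2)$, factor the right-hand side of \eqref{E:Working_Hydro_Sleigh_Equations} as $\frac{\ell}{D}\,K\,\I_c\,\mathbf{x}$ with $K$ the quarter-turn matrix, and use that $P$ commutes with $K$ (equivalently, $PKP^T=K$) to land on \eqref{E:Simp_Working_Hydro_Sleigh_Equations}, exactly as in the paper. The bookkeeping you flag at the end (form transforming as $P(\cdot)P^T$, covector as $(\cdot)P^T$) is handled identically there, so no gap remains.
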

\begin{proof}
Define the restricted inertia tensor $\I_c$ by
\begin{equation*}
\I_c:=\left ( \begin{array}{cc} J & - L_2 \\ -L_2 & M \end{array} \right ).
\end{equation*}
Since $\I_c$ is symmetric and positive definite, there exists $P\in {SO}(2)$ such that
$P\, \I_c \,P^T$ is a diagonal matrix whose entries $\tilde J, \tilde M >0$ are the eigenvalues of $\I_c$ and
satisfy $\tilde M \tilde J =\mbox{det} \, \I_c =D$.

Equations  \eqref{E:Working_Hydro_Sleigh_Equations} can be written in vector form as
\begin{equation*}
\frac{d}{dt} \, \left ( \begin{array}{c} \omega \\ v_1 \end{array} \right ) = \frac{1}{D} \left (
 \left \langle \left ( \begin{array}{c} \omega \\ v_1 \end{array} \right ) \, , \, \left ( \begin{array}{c} L_1 \\ Z \end{array} \right )  \right \rangle + \rho \alpha \right ) \left ( \begin{array}{cc}0 & -1 \\ 1 & 0 \end{array} \right )   \, \I_c \,  \left ( \begin{array}{c} \omega \\ v_1 \end{array} \right ),
\end{equation*}
where $\langle \cdot , \cdot \rangle$ denotes the usual inner product in $\R^2$.
Defining  $(\tilde \omega, \tilde v_1)^T:=P( \omega,  v_1)^T$, the above equations are equivalent to
 \begin{equation*}
\frac{d}{dt} \, \left ( \begin{array}{c} \tilde \omega \\  \tilde v_1 \end{array} \right ) = \frac{1}{D} \left (
 \left \langle \left ( \begin{array}{c}\tilde \omega \\  \tilde v_1 \end{array} \right ) \, , \, P \, \left ( \begin{array}{c} L_1 \\ Z \end{array} \right )  \right \rangle+ \rho \alpha \right ) \, P \, \left ( \begin{array}{cc}0 & -1 \\ 1 & 0 \end{array} \right )  \, \I_c \, P^T \,  \left ( \begin{array}{c} \tilde{\omega} \\ \tilde{v}_1 \end{array} \right ).
\end{equation*}
The result now follows by noticing that
\begin{equation*}
P \, \left ( \begin{array}{cc}0 & -1 \\ 1 & 0 \end{array} \right )  \, \I_c \, P^T =  \left ( \begin{array}{cc}0 & -1 \\ 1 & 0 \end{array} \right ) \, P \, \I_c \, P^T = \left ( \begin{array}{cc}0 & -1 \\ 1 & 0 \end{array} \right ) \left ( \begin{array}{cc}\tilde J & 0 \\ 0 & \tilde M \end{array} \right )
\end{equation*}
where we have used that  $P\in SO(2)$ in the first equality.
\end{proof}

The  solution of the system \eqref{E:Simp_Working_Hydro_Sleigh_Equations} can be obtained
by parameterizing level set $H=h$ of the energy function $H=\frac{1}{2}(J\omega^2 +Mv_1^2)$ as
\begin{equation*}
\omega(t)=\sqrt{ \frac{2h}{J}} \left ( \frac{2f(t)}{f(t)^2+1} \right ), \qquad v_1(t)= \pm \sqrt{\frac{2h}{M}} \left (
\frac{f(t)^2 -1}{f(t)^2+1} \right ).
\end{equation*}
Substitution into \eqref{E:Simp_Working_Hydro_Sleigh_Equations} gives a separable equation for $f(t)$ of
the form
\begin{equation}
\label{E:eqforf}
\dot f(t) = A f(t)^2 +2Bf(t) +C,
\end{equation}
where the coefficients $A, B, C,$ depend on the entries of the inertia tensor $\I$, on $\rho \alpha$, and also  on the energy value
$h$. The form of the solutions for the above equation depends on the sign of the discriminant $\Delta:=B^2-AC$.
A long but straightforward calculation shows that
\begin{equation*}
\Delta = \frac{1}{2}\left ( \frac{E}{D^2} \right ) (h-h_0),
\end{equation*}
and hence $\mbox{sign}(\Delta)=\mbox{sign}(h-h_0)$. Therefore the form of the solution varies depending
on whether the energy $h$ is higher than, smaller than or equal to the critical energy level $h_0$.
We will now give the explicit formulae for $\omega$ and $v_1$ in these three cases. To write our formulas
in a compact and unified manner we introduce  the coefficients:
\begin{equation}
\label{E:def-coeff}
\begin{split}
\gamma & := \sqrt{|\Delta|}=\frac{\sqrt{E}}{\sqrt{2}D} \sqrt{|h-h_0|}, \qquad K_1:=\frac{\sqrt{E}}{Z\sqrt{J}}\left (\frac{\sqrt{|h-h_0|}}{\sqrt{h}+\sqrt{h_1}} \right ), \\ K_2 &:=-\mbox{sign}(\alpha Z) \frac{L_1 \sqrt{M}}{Z\sqrt{J}} \left (
\frac{\sqrt{h}}{\sqrt{h_1}+\sqrt{h}} \right ), \qquad
K_3:=-\frac{\sqrt{2}M\sqrt{J}}{Z}\left ( \frac{1}{\sqrt{h_0} +\sqrt{h_1}} \right ),
\end{split}
\end{equation}
where $h_1$ is given by
\eqref{E:h_1h_2},
and we assume that the constants $L_1, Z$, are non-zero. Notice that  $\gamma, K_1,$ and $K_2$ depend on the energy value $h$.

\paragraph{The solution for $0<h<h_0$.}

In this case the discriminant $\Delta <0$ and  $f(t)=\frac{1}{A}(\gamma \tan (\gamma t)-B)$ is a
 solution of \eqref{E:eqforf}. After some algebra this gives the following solution of 
\eqref{E:Simp_Working_Hydro_Sleigh_Equations}:
 \begin{equation*}
 \begin{split}
\omega(t)&=2\sqrt{\frac{2h}{J}} \left ( \frac{K_2 \cos^2(\gamma t) +K_1 \sin (\gamma t)\cos (\gamma t)  }
{(1+K_2^2)\cos^2(\gamma t) + 2 K_1K_2\sin (\gamma t)\cos(\gamma t) +K_1^2\sin^2 (\gamma t)} \right ),  \\
v_1(t)&=\mbox{sign}(\alpha Z) \sqrt{\frac{2h}{M}} \left ( \frac{(K^2_2-1) \cos^2(\gamma t)+2K_1K_2 \sin (\gamma t)\cos (\gamma t) + K_1^2\sin^2(\gamma t)}
{(1+K_2^2)\cos^2(\gamma t) + 2 K_1K_2\sin (\gamma t)\cos(\gamma t) +K_1^2\sin^2 (\gamma t)} \right ).
\end{split}
\end{equation*}

We immediately obtain the following formula for the period $T$ of the solutions:
\begin{equation*}
\label{E:Period}
T=\frac{\pi}{\gamma}= \frac{\sqrt{2} \pi D}{\sqrt{E}\sqrt{h_0-h}}.
\end{equation*}
Notice that $T\rightarrow \infty$ as $h\rightarrow h_0$ from the left as expected.

We can also obtain a closed expression for the increment $\Delta \theta =\int_0^T \omega(t) \, dt$. For this matter
we note that a primitive of $\omega(t)$ is given by (see e.g.  \cite{GR}):
\begin{equation*}
\theta(t)=\frac{2}{\gamma}\sqrt{\frac{2h}{J}} \, \left ( \frac{\theta_1(t) +\theta_2(t) +\theta_3(t)}{4K_1^2K_2^2+(1+K_2^2-K_1^2)^2} \right ),
\end{equation*}
with
\begin{equation*}
\begin{split}
\theta_1(t)&=K_2(1+K_1^2+K_2^2)\gamma t, \\
\theta_2(t)&=\frac{K_1}{2}(K_1^2+K_2^2-1) \, \ln  ( (1+K_2^2)\cos^2(\gamma t) + 2 K_1K_2\sin (\gamma t)\cos(\gamma t) +K_1^2\sin^2 (\gamma t)  ), \\
\theta_3(t)&=-2K_1K_2\arctan (K_1\tan(\gamma t)+K_2).
\end{split}
\end{equation*}
Therefore, taking into account the change of branch in the expression for $\theta_3(t)$, we find that, under the above
assumption $L_2=0$,
\begin{equation}
\label{E:Deltaphi}
\Delta \theta = \theta(\pi/\gamma)-\theta(0)= \frac{2 \pi }{\gamma}\sqrt{\frac{2h}{J}} \, \left ( \frac{ K_2(1+K_1^2 +K_2^2)+2K_1K_2}{4K_1^2K_2^2+(1+K_2^2-K_1^2)^2} \right ).
\end{equation}
(In the general case $L_2\ne 0$ due to Proposition 4.1, $\Delta \theta$ will be a linear combination of \eqref{E:Deltaphi}) and of the integral $\int_0^T v_1(t) \, dt$.)

In particular notice that $\Delta \theta = O((h_0-h)^{-1/2})$ as $h\rightarrow h_0$ from the left. This
   implies that the behavior of the sleigh on the plane is extremely sensitive to the energy values that are slightly smaller than $h_0$. Namely, we have

\begin{proposition}
For any $\delta >0$ there exists a strictly increasing sequence of energy values
$\{ h^{(p)}_k \}_{k\in  \mathbb{N}}$ satisfying
$h_0-\delta<h^{(p)}_k<h_0$,
for which the motion of the contact point in the plane is periodic.
Similarly, there exists an increasing sequence of energy
 values $\{ h^{(u)}_k \}_{k\in  \mathbb{N}}$ (respectively, $\{ h^{(q)}_k \}_{k\in  \mathbb{N}}$) satisfying \\
$h_0-\delta<h^{(u)}_k<h_0$ (respectively,  $h_0-\delta<h^{(q)}_k<h_0$), such that the motion of the sleigh is 
unbounded (respectively, quasiperiodic).
\end{proposition}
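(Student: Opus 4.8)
The plan is to treat the quantity $\Psi(h):=\Delta\theta(h)/(2\pi)$ as a continuous function of the energy on $(0,h_0)$, to exploit that it diverges as $h\to h_0^-$, and then to use an intermediate-value argument to prescribe the arithmetic nature of $\Psi$ (integer, non-integer rational, or irrational) along an increasing sequence accumulating at $h_0$. Theorem~\ref{T:Reconstruction_periodic} then converts each arithmetic type into the corresponding type of planar motion.

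First I would record the behaviour of the coefficients \eqref{E:def-coeff} as $h\to h_0^-$: from \eqref{E:h_1h_2} and \eqref{E:def-coeff} one has $\gamma\to 0^+$ and $K_1\to 0$, while $K_2$ tends to a finite \emph{nonzero} limit $K_2^0$ (nonzero because $L_1,Z,\alpha\neq0$). Substituting into \eqref{E:Deltaphi}, the bracketed factor tends to $K_2^0/(1+(K_2^0)^2)\neq0$ while the prefactor $\tfrac{2\pi}{\gamma}\sqrt{2h/J}$ diverges, so $\Psi(h)$ has a genuine infinite limit; replacing $\Psi$ by $-\Psi$ if necessary (which preserves the three arithmetic classes) we may assume $\Psi(h)\to+\infty$. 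The same formulas show that the denominator $4K_1^2K_2^2+(1+K_2^2-K_1^2)^2$ never vanishes on $(0,h_0)$ — it is a sum in which at least one term is positive since $K_2\neq0$ there — so $\Psi$ is continuous on $(0,h_0)$.

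Next I would build the sequences. Since $\Psi$ is continuous on $(h_0-\delta,h_0)$ with $\Psi\to+\infty$, its image is an interval unbounded above, hence contains a half-line $(c_0,\infty)$, and in particular infinitely many integers, infinitely many non-integer rationals, and infinitely many irrationals exceeding $c_0$. An inductive intermediate-value construction then yields increasing sequences: given $h_{k-1}$ with $\Psi(h_{k-1})=c_{k-1}$, choose the next target $c_k>c_{k-1}$ of the desired type, pick $h'\in(h_{k-1},h_0)$ with $\Psi(h')>c_k$, and apply the intermediate value theorem on $[h_{k-1},h']$ to obtain $h_k$ with $\Psi(h_k)=c_k$. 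The resulting sequence is increasing and, being bounded by $h_0$ with $\Psi(h_k)=c_k\to\infty$, must converge to $h_0$ (a smaller limit would force $\Psi$ finite by continuity). Taking the $c_k$ to be non-integer rationals yields $\{h_k^{(p)}\}$, for which item~3(a) of Theorem~\ref{T:Reconstruction_periodic} gives periodic motion; taking the $c_k$ irrational yields $\{h_k^{(q)}\}$, for which item~3(b) gives dense (quasiperiodic) motion in an annulus. These two cases are immediate and require no further input.

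\emph{The unbounded case is the only delicate point.} Integer targets force $\Psi\in\mathbb{Z}$, but Theorem~\ref{T:Reconstruction_periodic} additionally requires $\mathbf{\Delta x}\neq0$ for unboundedness. The explicit solution for $\omega,v_1$, hence $\theta(t)$, together with $T=\pi/\gamma$, depend real-analytically on $h\in(0,h_0)$, so the components $\Delta x(h),\Delta y(h)$ in \eqref{E:Def_Phase_Shift} are real-analytic in $h$. Provided $\mathbf{\Delta x}\not\equiv0$ — which I would verify by a direct computation or an asymptotic estimate at one convenient energy, since otherwise \emph{every} integer-$\Psi$ energy would produce closed motion — at least one of $\Delta x,\Delta y$ has only isolated zeros, so $\mathbf{\Delta x}$ vanishes on a discrete set. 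Removing this discrete set from the infinite collection of integer-$\Psi$ energies accumulating at $h_0$ still leaves infinitely many such energies accumulating at $h_0$; extracting an increasing subsequence gives $\{h_k^{(u)}\}$ along which $\Psi\in\mathbb{Z}$ and $\mathbf{\Delta x}\neq0$, i.e.\ unbounded motion. The intermediate-value/density step is routine once $\Psi\to\pm\infty$ is established; the genuine obstacle is ruling out the degeneracy $\mathbf{\Delta x}\equiv0$, which real-analyticity reduces to the single non-vanishing statement $\mathbf{\Delta x}\not\equiv0$, and this I expect to be the crux to pin down rigorously.
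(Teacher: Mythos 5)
Your proposal takes the same route as the paper: the authors' entire proof is the single remark that the proposition ``follows from the results of Theorem \ref{T:Reconstruction_periodic} and the fact that $|\Delta \theta|\to \infty$ as $h\to h_0$ from the left,'' a fact they extract, exactly as you do, from the explicit formula \eqref{E:Deltaphi} and the observation $\Delta\theta = O((h_0-h)^{-1/2})$. Your expansion of this one-liner is faithful and correct: the asymptotics $\gamma\to 0^+$, $K_1\to 0$, $K_2\to K_2^0\neq 0$ (using $L_1, Z, \alpha\neq 0$) do give a genuinely infinite limit of $\Psi(h)=\Delta\theta(h)/(2\pi)$, the denominator $4K_1^2K_2^2+(1+K_2^2-K_1^2)^2$ is indeed positive on $(0,h_0)$, and the inductive intermediate-value construction hitting non-integer rational and irrational targets, followed by items 3(a) and 3(b) of Theorem \ref{T:Reconstruction_periodic}, settles the periodic and quasiperiodic sequences completely.

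In the unbounded case you have correctly identified a subtlety that the paper silently assumes: item 2 of Theorem \ref{T:Reconstruction_periodic} requires not only $\Delta\theta/2\pi \in \mathbb{Z}$ but also ${\bf \Delta x}\neq 0$, and neither the paper's one-line proof nor your write-up actually establishes ${\bf \Delta x}\neq 0$ at any integer-$\Psi$ energy --- you flag this honestly as the crux. Be aware, though, that your real-analyticity patch has a small hole even granting ${\bf \Delta x}\not\equiv 0$: isolated zeros of an analytic function on the open interval $(0,h_0)$ may still accumulate at the endpoint $h_0$, which is precisely where your integer-$\Psi$ energies accumulate, so ``removing a discrete set leaves infinitely many'' is not automatic --- in principle the zero set of $(\Delta x,\Delta y)$ could contain every integer-$\Psi$ energy near $h_0$. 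Closing this genuinely requires the quantitative non-vanishing input you defer (for instance an asymptotic lower bound on ${\bf \Delta x}$ along the integer energies as $h\to h_0^-$). Since the published proof offers nothing at all on this point, your argument is no less complete than the paper's, and strictly more careful; but for the unbounded sequence both proofs rest on the same unverified nondegeneracy.
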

The proof follows from the results  of Theorem \ref{T:Reconstruction_periodic} and the fact that 
$|\Delta \theta|\to \infty$ as $h\to h_0$ from the left.

\paragraph{The solution for $h=h_0$.} In this case the discriminant $\Delta = 0$, and \eqref{E:eqforf} 
has the solution \\ $f(t)=-\frac{1}{A}(\frac{1}{t}+B)$.
After some algebra this gives the following solution of \eqref{E:Simp_Working_Hydro_Sleigh_Equations}:
 \begin{equation*}
 \begin{split}
\omega(t)&=2\sqrt{\frac{2h_0}{J}}\,  \frac{K_3t +K_2 t^2 }
{(1+K_2^2)t^2 + 2 K_3K_2t+K_3^2}\,  ,  \\
v_1(t)&=\mbox{sign}(\alpha Z) \sqrt{\frac{2h_0}{M}} \, \frac{(K^2_2-1)t^2+2K_3K_2 t + K_3^2 }
{(1+K_2^2)t^2 + 2 K_3K_2t+K_3^2}  \, ,
\end{split}
\end{equation*}
where it is understood that $h=h_0$ in the expression for $K_2$.

To analyze the trajectory of the contact point of the sleigh, consider the
particular solution $\omega(t-t_0), \, v_1(t-t_0)$ with $t_0=-\frac{K_2K_3}{1+K_2^2}$.
In order to keep the notation as simple as possible, we will also denote this solution by
$\omega(t), \, v_1(t)$. We have
  \begin{equation*}
 \begin{split}
\omega(t)&=2\sqrt{\frac{2h_0}{J}} \left ( \frac{C_1^2K_2 t^2-K_3C_1C_2t-K_2K_3^2 }
{C_1(C_1^2 t^2 + K_3^2)} \right ),  \\
v_1(t)&=\mbox{sign}(\alpha Z) \sqrt{\frac{2h_0}{M}} \left ( \frac{C_1^2C_2t^2+4K_3K_2C_1 t - K_3^2C_2 }
{C_1(C_1^2 t^2 + K_3^2)}  \right ),
\end{split}
\end{equation*}
where $C_1=1+K_2^2$ and $C_2=K_2^2-1$.

Notice that
$$
\lim\limits_{t\to \pm\infty}  \frac{v_1}{\omega}  =
\frac{1}{2} \mbox{sign}(\alpha Z)\sqrt{\frac{J}{M}}  \frac{C_2}{K_2}  := R,
$$
hence the limit motions of the sleigh are along circles of the same radius $|R|$.
Moreover, the circles are traversed in the same direction as both $\omega(t)$ and
$v_1(t)$ have the same limits as $t\to \pm\infty$.

\begin{proposition} The motion of the contact point in the plane in the case $h=h_0$ is bounded.
\end{proposition}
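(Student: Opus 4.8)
The plan is to pass to the complex coordinate $z = x + \mathrm{i}y$ on the plane, so that the reconstruction equations $\dot x = v_1\cos\theta$, $\dot y = v_1\sin\theta$, $\dot\theta=\omega$ combine into the single equation $\dot z = v_1\,\mathrm{e}^{\mathrm{i}\theta}$, and to show that $z(t)$ stays in a bounded region for all $t\in\mathbb{R}$. Since $z$ is continuous, it is automatically bounded on any compact time interval, so the whole question concerns the behaviour as $t\to\pm\infty$. First I would record the asymptotics of the explicit solution: because $\omega(t)$ and $v_1(t)$ are ratios of quadratics in $t$ with the strictly positive denominator $C_1(C_1^2t^2+K_3^2)$, they extend smoothly to all of $\mathbb{R}$ and tend, as $t\to\pm\infty$, to the finite nonzero limits already computed above, with $v_1/\omega\to R$. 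In particular $\omega_\infty\neq 0$, so there is a $T_0>0$ such that $\omega(t)$ is bounded away from zero for $|t|\ge T_0$; on these two tails the quotient $v_1/\omega$ is smooth.

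The key step is an integration by parts on each tail. Writing $\mathrm{e}^{\mathrm{i}\theta}=\tfrac{1}{\mathrm{i}\omega}\frac{d}{dt}\mathrm{e}^{\mathrm{i}\theta}$ (valid where $\omega\neq 0$, using $\dot\theta=\omega$), one has for $t\ge T_0$
\[
z(t) = z(T_0) + \frac{1}{\mathrm{i}}\left[\frac{v_1}{\omega}\,\mathrm{e}^{\mathrm{i}\theta}\right]_{T_0}^{t} - \frac{1}{\mathrm{i}}\int_{T_0}^{t}\mathrm{e}^{\mathrm{i}\theta}\,\frac{d}{ds}\!\left(\frac{v_1}{\omega}\right)ds .
\]
I would then bound the two new contributions. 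The boundary term $\tfrac{1}{\mathrm{i}}\frac{v_1(t)}{\omega(t)}\mathrm{e}^{\mathrm{i}\theta(t)}$ has modulus $|v_1/\omega|\to|R|$ and is therefore bounded, even though it does not converge (its phase $\mathrm{e}^{\mathrm{i}\theta(t)}$ keeps rotating, reflecting the asymptotically circular motion). For the integral I would use that $v_1/\omega$ is a quotient of two quadratics tending to the constant $R$, so its expansion at infinity is $R + c_1/s + O(s^{-2})$ and hence $\frac{d}{ds}(v_1/\omega)=O(s^{-2})$ as $s\to\pm\infty$; this makes $\int^{\pm\infty}\bigl|\frac{d}{ds}(v_1/\omega)\bigr|\,ds$ finite, so the integral converges absolutely and stays bounded as $t\to+\infty$. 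The same computation on the tail $t\le -T_0$ gives boundedness as $t\to-\infty$, and combining the two tails with the compact interval $[-T_0,T_0]$ shows that $|z(t)|$ is bounded on all of $\mathbb{R}$, which is the assertion.

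I expect the only real subtlety to be the zeros of $\omega$: at such instants the factor $1/\omega$ in the integration by parts is singular, so the manipulation cannot be carried out globally in $t$. This is precisely why I restrict the integration by parts to the tails $|t|\ge T_0$, where $\omega$ is bounded away from zero, and control the complementary compact set trivially by continuity. A secondary point worth flagging is that the argument yields boundedness rather than convergence of $z(t)$: the boundary term oscillates with fixed amplitude $|R|$, consistent with the preceding observation that the limit motions are uniform circular motions of radius $|R|$ traversed in the same sense, so $z(t)$ approaches a genuine circle at each end rather than a point.
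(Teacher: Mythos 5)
Your proof is correct, but it takes a genuinely different route from the paper's. The paper does not work with the contact point directly: it introduces the auxiliary body point $C$ with body coordinates $(0,R)$, observes that its speed $V_C=v_1-R\omega$ decays like $1/t$, and proves that the improper integrals $\Delta x_C=\int_{-\infty}^\infty V_C\cos\theta\,dt$ and $\Delta y_C=\int_{-\infty}^\infty V_C\sin\theta\,dt$ converge, which requires computing the primitive $\theta(t)=\theta_1+\theta_2+\theta_3$ explicitly, exploiting the odd/even parity of $\theta_1,\theta_3,V_C$ versus $\theta_2$, and then integrating by parts against the oscillatory factor $\sin\theta_1$ coming from the \emph{linear} part of the phase, with a remainder dominated by $1/t^2$. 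You instead integrate by parts on $\dot z=v_1\,\mathrm{e}^{\mathrm{i}\theta}$ using the \emph{full} phase, via $\mathrm{e}^{\mathrm{i}\theta}=(\mathrm{i}\omega)^{-1}\tfrac{d}{dt}\mathrm{e}^{\mathrm{i}\theta}$ on the tails where $\omega$ is bounded away from zero --- a nonstationary-phase argument whose only inputs are $\omega(\pm\infty)\neq 0$ (true here since $K_2\neq 0$, i.e.\ $L_1\neq 0$ and $h_0>0$, equivalently $h_0\neq h_1$ in the generic regime the paper assumes) and the rational asymptotics $v_1/\omega=R+O(1/t)$, so that $\tfrac{d}{dt}(v_1/\omega)=O(t^{-2})$ is absolutely integrable. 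Your handling of the zeros of $\omega$ by confining the integration by parts to $|t|\ge T_0$ is exactly the right fix. What each approach buys: yours is shorter and avoids both the explicit formula for $\theta(t)$ and the parity bookkeeping, at the cost of concluding only boundedness (your boundary term $\tfrac{v_1}{\omega}\mathrm{e}^{\mathrm{i}\theta}$ oscillates with amplitude tending to $|R|$, which you correctly flag); the paper's computation yields the sharper geometric statement that the trajectory of $C$ converges at both ends, so the limit circles have centers a finite distance apart, and it sets up the machinery reused for $h>h_0$, where the analogous displacement is expressed through Beta-type integrals. Note that your own trick, applied to the point $C$ (i.e.\ to $\dot z_C=V_C\,\mathrm{e}^{\mathrm{i}\theta}$ with $V_C=O(1/t)$, $V_C/\omega\to 0$ and $\tfrac{d}{dt}(V_C/\omega)=O(t^{-2})$), would recover that stronger conclusion with no extra effort.
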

\begin{proof}
It is sufficient to show that the distance between the centers of the limit circumferences  is finite.
The centers coincide with limit positions of the point $C$ of the sleigh with body coordinates $(0,R)$. Indeed, the components of the velocity of this point are given by
\begin{equation*}
\dot x_C= V_C (t) \cos (\theta (t)), \qquad \dot y_C= V_C (t) \sin (\theta (t)),
\end{equation*}
where
\begin{equation*}
V_C (t)=v_1(t)-R\omega(t) =\mbox{sign}(\alpha Z) \sqrt{\frac{2h_0}{M}}\frac{K_3(4K_2^2 +C_2^2)t}{K_2(C_1^2t^2 +K_3^2)},
\end{equation*}
that goes to zero as $t\to \pm \infty$. Therefore, the vector that connects the two centers has the components
\begin{equation*}
\Delta x_C= \int_{-\infty}^\infty V_C (t) \cos \theta (t) \, dt, \qquad
\Delta y_C= \int_{-\infty}^\infty V_C (t) \sin \theta (t) \, dt,
\end{equation*}
where $\theta(t)=\theta_1 (t) + \theta_2(t) +\theta_3(t)$ with
\begin{equation*}
\begin{split}
\theta_1 (t)&=2\sqrt{\frac{2h_0}{J}}\frac{K_2 t}{C_1}, \qquad
\theta_2 (t)=-\sqrt{\frac{2h_0}{J}}\frac{K_3 C_2}{C_1^2} \ln (C_1^2t^2+K_3^2)+\theta_0, \\
\theta_3(t)&=-4\sqrt{\frac{2h_0}{J}}\frac{K_3 K_2}{C_1^2} \arctan \left ( \frac{C_1t}{K_3} \right ),
\end{split}
\end{equation*}
$\theta_0$ being an integration constant. Notice that $\theta_1, \theta_3$ and $V_C$ are odd
functions of $t$ whereas $\theta_2$ is even. Therefore, using basic trigonometric identities, we get:
\begin{equation*}
\Delta x_C= -\int_{-\infty}^\infty  V_{C} \sin (\theta_1) \sin(\theta_2) \cos (\theta_3)  \, dt  - \int_{-\infty}^\infty  V_{C} \cos (\theta_1)  \sin (\theta_2)\sin (\theta_3) \, dt.
\end{equation*}
The above integrals can be shown to converge using integration by parts. For example,
using that $\sin (\theta_1) =\frac{d}{dt} ( -C_3\cos (\theta_1))$
for a certain constant $C_3$, the first integral rereads
\begin{equation*}
 \int_{-\infty}^\infty  V_{C} \sin (\theta_1) \sin(\theta_2) \cos (\theta_3)  \, dt  =
 -C_3\cos (\theta_1) V_{C}\sin(\theta_2) \cos (\theta_3) \big |_{-\infty}^\infty + C_3
  \int_{-\infty}^\infty   \cos (\theta_1)  k(t) \, dt,
\end{equation*}
where $k(t)=\frac{d}{dt} (V_{C} \sin(\theta_2) \cos (\theta_3))$.  It is seen that the two
 terms on the right are finite. Namely, the limits of the boundary terms are zero since $V_C\to 0$ as $t\to \pm \infty$.
 Next, the integral is absolutely convergent since $|k(t)|$ can be dominated
 by a function that decays as $1/t^2$ for large $|t|$.

 The proof that $\Delta y_C$ is finite is analogous.
\end{proof}

\paragraph{The solution for $h>h_0$.} In this case the discriminant $\Delta >0$ and  two solutions
of \eqref{E:eqforf} should be considered. These are $f_1(t)= \frac{1}{C}(-\gamma \tanh (\gamma t) -B)$ and
$f_2(t) =\frac{1}{C}(-\gamma \coth (\gamma t) -B)$, and correspond to the two heteroclinic orbits that, together with the equilibra, make up  the energy contour.

Starting out with $f_1(t)$ we find the solution to  \eqref{E:Simp_Working_Hydro_Sleigh_Equations}
\begin{equation}
\label{E:branch1}
 \begin{split}
\omega(t)&=2\sqrt{\frac{2h}{J}} \left ( \frac{K_2 \cosh^2(\gamma t) -K_1 \sinh (\gamma t)\cosh (\gamma t)  }
{(1+K_2^2)\cosh^2(\gamma t) - 2 K_1K_2\sinh (\gamma t)\cosh(\gamma t) +K_1^2\sinh^2 (\gamma t)} \right ),  \\
v_1(t)&=\mbox{sign}(\alpha Z) \sqrt{\frac{2h}{M}} \left ( \frac{(K^2_2-1) \cosh^2(\gamma t)-2K_1K_2 \sinh (\gamma t)\cosh (\gamma t) + K_1^2\sinh^2(\gamma t)}
{(1+K_2^2)\cosh^2(\gamma t) - 2 K_1K_2\sinh (\gamma t)\cosh(\gamma t) +K_1^2\sinh^2 (\gamma t)}  \right ),
\end{split}
\end{equation}
that is contained in the semi-plane $L_1\omega +Zv_1+\rho \alpha >0.$

Whereas considering  $f_2(t)$ one  finds the solution to  \eqref{E:Simp_Working_Hydro_Sleigh_Equations}
\begin{equation*}
 \begin{split}
\omega(t)&=2\sqrt{\frac{2h}{J}} \left ( \frac{K_2 \sinh^2(\gamma t) -K_1 \sinh (\gamma t)\cosh (\gamma t)  }
{(1+K_2^2)\sinh^2(\gamma t) - 2 K_1K_2\sinh (\gamma t)\cosh(\gamma t) +K_1^2\cosh^2 (\gamma t)} \right ),  \\
v_1(t)&=\mbox{sign}(\alpha Z) \sqrt{\frac{2h}{M}} \left ( \frac{(K^2_2-1) \sinh^2(\gamma t)-2K_1K_2 \sinh (\gamma t)\cosh (\gamma t) + K_1^2\cosh^2(\gamma t)}
{(1+K_2^2)\sinh^2(\gamma t) - 2 K_1K_2\sinh (\gamma t)\cosh(\gamma t) +K_1^2\cosh^2 (\gamma t)}  \right ),
\end{split}
\end{equation*}
that is contained in the semi-plane $L_1\omega +Zv_1+\rho \alpha <0.$

For the sequel, we only consider the branch of the solution corresponding to the expressions
for $\omega(t)$ and $v_1(t)$ given by \eqref{E:branch1}.

Integrating $\omega(t)$ and denoting $\kappa=2\sqrt{2h/J}$, we get
\begin{align*}
\theta (t) &= \int \omega \, dt = \theta_1+\theta_2+ \theta_3, \\
\theta_1 &= \kappa \frac{K_1+K_2}{ (K_1+K_2)^2+1 } \, t , \\
\theta_2 & = \kappa \frac {K_1}{2 \gamma  } \frac{ K_2^2-K_1^2-1 }{ [ (K_1+K_2)^2+1 ]\cdot[(K_1-K_2)^2+1 ] } \\
& \qquad \cdot \ln \left( [(K_1-K_2)^2+1]e^{4\gamma t}+[2(K_2^2-K_1^2+1)]e^{2\gamma t}+[(K_1+K_2)^2+1] \right), \\
\theta_3 &= \kappa \frac{2 K_1 K_2 }{\gamma   [ (K_1+K_2)^2+1 ]\cdot[(K_1-K_2)^2+1 ]}
\arctan \left( \frac{1 }{2 K_1} [ (K_1- K_2)^2+1] e^{2\gamma  t}+ \frac{1 }{2 K_1}[K_2^2-K_1^2+1] \right) .
\end{align*}

We will assume that  $\mbox{sign}(\alpha Z)=1$ in the expression for $v_1$.
Note that the radii of the limit circles are different:
$$
r_\pm = \lim_{t \to \pm \infty} \frac{v_1}{\omega}= \frac 12 \sqrt{\frac{J}{M}} \frac{ (K_2 \mp K_1)^2-1 }{K_2 \mp K_1}.
$$
Then, to evaluate the distance between their centers, introduce the ``floating radius''
$$
\rho(t) = \frac{r_+ e^{\gamma  t}+ r_- e^{-\gamma  t} }{ e^{\gamma  t}+ e^{-\gamma  t} }, \text{so that} \quad \rho(\pm \infty)= r_{\pm},
$$
and the floating point of the sleigh whose coordinates in the body fixed frame are $(0,\rho (t))$.
The limit positions of this point coincide with the centers of the limit circumferences.
The components of the velocity of the point are
\begin{equation*}
\dot x_\rho= V_\rho (t) \cos \theta (t), \qquad \dot y_\rho= V_\rho (t) \sin \theta (t),
\end{equation*}
where
$$
V_{\rho}(t) = v_1(t)- \rho (t) \omega(t)= \sqrt{\frac{2h}{M}} \frac {4 K_1^2} {K_2^2-K_1^2}
\cdot \frac{1}{ [(K_1-K_2)^2+1] e^{2 \gamma  t}+2 [ K_2^2-K_1^2-1] + [(K_1+K_2)^2+1] e^{-2\gamma  t}} ,
$$

Now introduce the new variable
$$
z = \frac{1 }{2 K_1} [ (K_1- K_2)^2+1 ] e^{2\gamma  t},
$$
and the constant
\begin{equation} \label{k_3}
k:=K_2^2-K_1^2+1=\frac{Eh_0 +JZ^2(2\sqrt{h h_1}+h_1)}{(\sqrt{h_1}+\sqrt{h})^2}.
\end{equation}
Then
\begin{gather}
e^{2\gamma  t} = \frac{ 2 K_1 z  }{(K_1- K_2)^2+1 }, \notag \\
dz = \frac{\gamma }{K_1}[(K_1-K_2)^2+1] e^{2\gamma  t} \, dt = 2\gamma z\, dt, \label{*} \\
z^2+\frac{k}{K_1}z+\frac{k^2}{4K_1^2}+1 =\frac{(K_1-K_2)^2+1}{4K_1^2}\left ( [(K_1-K_2)^2+1]e^{4\gamma t} +2(K_2^2-K_1^2+1)e^{2\gamma t}
+(K_1+K_2)^2+1 \right ).
\notag
\end{gather}
and, therefore,
\begin{align*}
\theta_1 & = \kappa \frac{K_1+K_2}{ (K_1+K_2)^2+1 } \frac{1}{2 \gamma } \ln \left (\frac{ 2 K_1 z  }{(K_1- K_2)^2+1 }\right ),  \\
\theta_2 & = \kappa \frac {K_1}{2 \gamma  } \frac{ K_2^2-K_1^2-1 }{ [ (K_1+K_2)^2+1 ]\cdot[(K_1-K_2)^2+1 ] } \ln \left (z^2+\frac{k}{K_1}z+\frac{k^2}{4K_1^2}+1 \right ) \quad
\text{(up to an additive constant)}, \\
\theta_3 & = \kappa \frac{2 K_1 K_2 }{\gamma   [ (K_1+K_2)^2+1 ]\cdot[(K_1-K_2)^2+1 ]} \arctan \left (z+\frac{k}{2K_1}\right ), \\
V_{\rho} & = \sqrt{\frac{2h}{M}} \, \frac{ [(K_1-K_2)^2+1]}{K_2^2 - K_1^2}\,  \frac{1}{z^2+\frac{k}{K_1}z+\frac{k^2}{4K_1^2}+1}\, e^{2\gamma  t} =
\sqrt{\frac{2h}{M}} \frac{1} {z^2+\frac{k}{K_1}z+\frac{k^2}{4K_1^2}+1} \frac{2 K_1z} {K_2^2 - K_1^2} \,  .
\end{align*}

The components of the vector of the distance between the centers of the limit circles are
$$
\Delta x_\rho = \int_{-\infty}^\infty V_{\rho} \cos\theta (t) \, dt , \quad
\Delta y_\rho = \int_{-\infty}^\infty V_{\rho} \sin\theta (t) \, dt
$$
Then, in view of \eqref{*},
\begin{align*}
\Delta x_\rho & = \sqrt{\frac{2h}{M}} \frac {2K_1} {K_2^2 - K_1^2}\, \int_{-\infty}^\infty
\frac{ z}{z^2+\frac{K_3}{K_1}z+\frac{K_3^2}{4K_1^2}+1} \cos\theta (t) \, dt \\
& = \sqrt{\frac{2h}{M}} \frac {K_1}{\gamma (K_2^2 - K_1^2) } \,\int_{0}^\infty \frac{\cos\theta}{z^2+\frac{K_3}{K_1}z+\frac{K_3^2}{4K_1^2}+1} dz ,
\end{align*}
and similarly for $\Delta y_\rho$.

Under the next substitution $z = \tan u$ the above integrals yield
\begin{gather} \label{complex_deltas}
\Delta x_C \pm i\Delta y_C
 = \frac {\sqrt{2h/M} \,K_1}{\gamma (K_2^2 - K_1^2) } a_0^{\pm i a_1} \int_{0}^{\pi/2} \frac{(\sin u)^{\pm ia_1}(\cos u)^{\mp i (a_1+2a_2)}\exp\left (\pm i a_3 \arctan \left ( \tan u + \frac{k}{2K_1} \right ) \right )}{
\left (1 + \frac{k}{2K_1}\sin (2u) + \frac{k^2}{4K_1^2}
 \cos ^2u \right )^{1 \mp ia_2}}\,  du, \\ \nonumber
 a_0 = \frac{2K_1}{(K_1-K_2)^2+1}, \quad a_1 = \frac{\kappa (K_1 +K_2)}{2\gamma((K_1+K_2)^2+1)},  \\ \nonumber
  a_2= \frac{\kappa K_1 (K_2^2-K_1^2-1)}{2((K_1+K_2)^2+1)((K_1-K_2)^2+1)},
 \quad a_3=\frac{2\kappa K_1 K_2}{\gamma ((K_1+K_2)^2+1)((K_1-K_2)^2+1)}.
\end{gather}
The latter integrals represent a generalization of the classical Beta-function (see, e.g., \cite{GR}). Their product gives the square of the distance between the centers.

\paragraph{Remark.} We could not calculate the above integrals explicitly. Note, however, that in the limit $h\to\infty$, when the energy prevails over the circulation effect, due to \eqref{E:def-coeff} and \eqref{k_3}, the integrals
\eqref{complex_deltas} reduce to
$$
\Delta x_C \pm i\Delta y_C
 = -\frac {2\sqrt{D} }{Z } c_0^{\mp i \zeta}
 \int_{0}^{\pi/2} (\sin u)^{\mp i\zeta}(\cos u)^{\mp i\zeta} \left(\cos (2c_1 \,u) \mp i \sin(2c_1 \,u)\right) \, du.
$$
$$
\zeta=\frac{DZ}{E}, \qquad c_1=\frac{\sqrt{D}ML_1}{E}, \qquad c_0=\frac{Z\sqrt{D}}{\sqrt{EM}+L_1M}.
$$
Note that the same reduction holds in case of zero circulation ($h_0=h_1=0$ in \eqref{k_3} implies $k=0$),
which was studied in \cite{hydro-sleigh}.  As was shown there, the product of the above integrals gives the
square of the distance between the centers in the following closed form
\begin{equation*}
\label{E:distance_sqd}
(\Delta x_C)^2 + (\Delta y_C)^2 = \frac{2\pi D}{Z^2} \left ( \frac{\zeta}{c_1^2+\zeta^2} \right ) \left ( \frac{\cosh ( \zeta \pi) - \cos (c_1 \pi )}{\sinh (\zeta \pi )} \right ).
\end{equation*}

\section*{Conclusions and further work}

We presented one of the first examples of nonholonomic hydrodynamical system, which   
is related to the design of underwater vehicles. The 
nonholonomic constraint can be interpreted as a first approximation model for a fin.  
From the mathematical point of view, our example is remarkable since both asymptotic and periodic 
dynamics coexist in the reduced phase space. It has been observed that
the value of the energy is a crucial parameter in the qualitative behavior
of the body on the plane. To our knowledge, this type of phenomenon is quite rare
in nonholonomic dynamics. 
(A similar behavior occurs in the classical Appel--Korteweg problem on the rolling disc \cite{App, Fed87}.)

For the future, we intend to study the motion of the hydrodynamic Chaplygin sleigh coupled to
{\em point vortices} in the fluid \cite{Ne01}.  The equations of motion for interacting point vortices 
and rigid bodies (without nonholonomic constraints) were recently derived in \cite{cylvortices, Bor_Mam_Ram} 
and since then there have been significant efforts towards discerning integrability and chaoticity \cite{Ra, RoAr2010} 
and towards uncovering the underlying geometry of these models \cite{Va_et_al}.  
We plan to couple the nonholonomic Chaplygin sleigh with one or several point vortices in the flow, 
taking these models as our next starting point. 
\medskip

\paragraph{Acknowledgments} 

{\small
We thank the GMC (Geometry, Mechanics and Control Network, project
MTM2009-08166-E, Spain) for facilitating our collaboration during the events that it organizes. 

Yu.N.F's contribution was partially supported by the project MICINN MTM2009-06973. 
  
We are also thankful to
Hassan Aref, Francesco Fass\`o, Andrea Giacobbe, and Paul Newton for useful and interesting discussions,
and to Maria Przybylska who helped us to integrate the reduced system \eqref{E:Simp_Working_Hydro_Sleigh_Equations}.}

{\small
JV is a postdoc at the Department of Mathematics of UC San Diego, partially supported by NSF CAREER award 
DMS-1010687  and NSF FRG grant DMS-1065972, and by the {\sc irses} project {\sc
geomech} (nr.\ 246981) within the 7th European Community Framework Programme, 
and is on leave from a Postdoctoral Fellowship of the Research Foundation--Flanders (FWO-Vlaanderen). 
LGN acknowledges the hospitality of the  Department de Matem\'atica Aplicada I, 
and IV, at UPC Barcelona for his recent stay there. }

\end{document}